\documentclass[twocolumn]{article}
\usepackage[utf8]{inputenc}
\usepackage{latexsym,amsmath}
\usepackage{amsbsy}
\usepackage{graphicx}
\usepackage{epstopdf}
\usepackage[dvipsnames]{xcolor}
\usepackage{tikz}
\usetikzlibrary{positioning,calc,fit,backgrounds,arrows.meta,decorations.pathreplacing}
\usepackage{float}
\usepackage{booktabs}
\usepackage{array}
\usepackage{authblk}
\usepackage{titletoc}
\usepackage{cite} % sort and compress citation lists into ranges
\usepackage{hyperref}
\usepackage{xurl} % let bibliography URLs break anywhere in narrow columns
\usepackage{physics}
\usepackage{amsthm}
\usepackage{amssymb}
\usepackage{amsfonts}
\usepackage{mathtools}
\usepackage{algorithm}
\usepackage[noend]{algpseudocode}
\usepackage{braket}
\usepackage[a4paper, total={7in, 9.5in}, columnsep=0.25in]{geometry}
\usepackage{fancyhdr}
\newtheorem{theorem}{Theorem}
\newtheorem{lemma}[theorem]{Lemma}

\newtheorem{definition}{Definition}

\ExplSyntaxOn
\NewDocumentCommand{\code}{m}
  {
    \tl_set:Nn \l_tmpa_tl {#1}
    \tl_replace_all:Nnn \l_tmpa_tl {.} {.\allowbreak}
    \texttt{ \tl_use:N \l_tmpa_tl }
  }
\ExplSyntaxOff
\newcommand{\supp}{\operatorname{supp}}      % support of a monomial
\newcommand{\anti}{\operatorname{Anti}}      % anticommuting subset of the tracked monomials
\newcommand{\com}{\operatorname{Com}}       % commuting subset of the tracked monomials
\newcommand{\sites}{\operatorname{sites}}    % the sites a support occupies
\newcommand{\invcount}{\operatorname{inv}}   % inversions of the merged supports
\renewcommand{\rank}{\operatorname{rank}}    % rank assignment (physics.sty defines \rank)
\newcommand{\splitmix}{\operatorname{splitmix}} % within-rank partition hash
\newcommand{\dg}[1]{\lVert #1 \rVert}        % degree, the truncation functional
\newcommand{\gdeg}{g}                        % degree of a pool of gate generators
\newcommand{\ilist}{\mathcal{I}}             % the anticommutation index
\newcommand{\imax}{\ilist_{\max}}            % longest inverted list
\newcommand{\qs}[1]{\hat{#1}}                % anticommutation support: the positions a generator reads
\newcommand{\iu}{\mathrm{i}}                 % imaginary unit
\newcommand{\symd}{\mathbin{\triangle}}      % symmetric difference of sets
\newcommand{\bigsymd}{\mathop{\triangle}\limits} % its n-ary form
\newcommand{\flip}[2]{#1^{#2}}               % occupation string with a mode set flipped

\title{monoprop: Scaling Pauli and Majorana propagation via Anticommutation Indices}
\author[1,2,*]{Aaron Miller}
\affil[1]{Algorithmiq S.r.l., Via della Chiusa 15, 20123 Milan, Italy}
\affil[2]{School of Physics, Trinity College Dublin, College Green, Dublin 2, Ireland}
\date{\vspace*{-1.em}\small (Dated: \today)}

\begin{document}
\makeatletter
\twocolumn[
  \begin{@twocolumnfalse}
    \maketitle
    \begin{abstract}
      We introduce \code{monoprop}, an open-source, massively parallel operator-propagation framework for fermion and qubit bases.
      For \(N\) qubits or modes, existing codes represent each of the \(K\) tracked operator terms (Pauli strings or products of Majorana operators) as a \(2N\)-bit string, testing each for anticommutation with the gate at \(O(KN)\) cost per gate.
      Here, we introduce the \emph{anticommutation index} which rapidly looks up the terms anticommuting with a gate.
      A local gate changes only the terms that touch its sites, and when terms are truncated to act on few sites, only a fraction \(\sim1/N\) touch any given qubit or mode, and we use this to prove the time per gate scales as \(O(K/N)\).
      This counterintuitive result tells us that for a fixed number of terms, increasing the number of sites can make things faster.
      We demonstrate this numerically, and show \code{monoprop} scales almost ideally to 8192 cores, tracking \(9.5\times10^{10}\) terms.
      Beyond propagation, we present memory-efficient surrogate graphs, which record one propagation and replay it at new angles, for fast energy re-estimation and parameter gradients, and an efficient method for finding energy-minimising gates and optimal angles for simulating adaptive quantum-chemistry circuits.
    \end{abstract}
    \vspace{1.5em}
  \end{@twocolumnfalse}
]
\begingroup
\renewcommand\thefootnote{*}
\let\@footnotetext\H@@footnotetext
\footnotetext[1]{\href{mailto:aaron.miller@algorithmiq.fi}{aaron.miller@algorithmiq.fi}}
\endgroup
\makeatother

\section{Introduction}

Operator propagation methods classically approximate expectation values of quantum systems by evolving operators rather than states.
Rather than evolving the state forwards, they expand the observable in an operator basis, evolve it backwards through the circuit in the Heisenberg picture, and evaluate it against the initial state.
Initially introduced for qubits, the approach is known as Pauli propagation or sparse Pauli dynamics~\cite{rall_pauli_propagation_2019, rudolph_pauli_propagation_2025, begusic_fast_2024, begusic_clifford_2025, gharibyan_pps_guide_2025}.

Propagation methods came to prominence by classically simulating large quantum-hardware experiments.
When a 127-qubit superconducting processor was used to compute expectation values from a kicked Ising model beyond exact classical simulation~\cite{kim_evidence_2023}, propagation methods reproduced those observables in a fraction of the quantum runtime~\cite{begusic_spd_2023, begusic_fast_2024,lowesa_2023}, alongside tensor-network and other classical simulations~\cite{tindall_eagle_2024, liao_pepo_2023, patra_tensor_2024, anand_zne_2023, kechedzhi_effective_2024}.
More recently, these methods have been extended to fermionic bases as Majorana propagation~\cite{majorana_propagation_2025} and applied to dynamics of two-dimensional lattice fermions~\cite{danna_majorana_strings_2026}, thermal states through imaginary-time evolution~\cite{rudolph_thermal_2026}, and adaptive construction of ground-state circuits~\cite{adapt-vmpe}.
Truncation errors are provably small for unstructured fermionic circuits, where they decay exponentially with the length of Majorana operators~\cite{majorana_propagation_2025}, for time-evolved observables of weakly interacting fermions~\cite{facelli_fast_convergence_2026}, and for noisy qubit circuits~\cite{aharonov_noisy_2023, schuster_noisy_2025, gonzalez-garcia_pauli_2025}.

These advances have produced open-source, mostly CPU-based, implementations.
\code{PauliPropagation.jl}~\cite{rudolph_pauli_propagation_2025} is a \code{Julia} package seen as the reference implementation for Pauli propagation.
\code{PauliStrings.jl}~\cite{loizeau_paulistrings_2025} and \code{spd}~\cite{begusic_spd_code} target many-body dynamics, \code{pauli-prop}~\cite{qiskit_pauliprop_2025}, \code{PauLIB}~\cite{krotz_paulib_2026} and \code{ppvm}~\cite{ppvm} provide Pauli propagation backends, and \code{MajoranaPropagation.jl}~\cite{danna_majorana_strings_2026} implements Majorana propagation.
GPU Pauli propagation simulators have shown orders of magnitude speedups over CPU codes~\cite{nvidia_cupauliprop, nvidia_cuquantum_2511_blog, mazumder_benchmarking_2026}.
To our knowledge, distributed-memory implementation using the Message Passing Interface (MPI) currently holds the title of the largest simulation, tracking an impressive one trillion Pauli strings over \(12{,}288\) nodes~\cite{broers_orqa_2025}.

Most of these implementations are \emph{dense}, that is, they store each Pauli or Majorana term as a bit string with two bits per qubit or mode, and test anticommutation of every term against every gate.
For \(K\) terms on \(N\) qubits or modes, each gate therefore costs \(O(KN)\) time and memory, although it changes only the terms it anticommutes with.
These implementations do not exploit the sparsity of the basis in which the terms are expressed, as simulations discard terms acting on more than \(d\) sites, nor the locality of gates, which act on \(O(1)\) sites (\emph{local} gates).

Many general purpose quantum software packages instead store each term sparsely, by its few non-identity sites~\cite{mcclean_openfermion_2020, bergholm_pennylane_2018, cirq_2025}.
To our knowledge, however, no high-performance propagation code does, as sparse storage alone makes propagation slower rather than faster.
The essential step of propagation is finding the terms a gate anticommutes with, and without a way to find them directly, every term must still be tested against every gate, with each test much slower than the bitwise operations of a dense code.

This motivates \code{monoprop}: an open-source, MPI-distributed \code{C++} code for Pauli and Majorana propagation, with Python bindings, that makes sparse storage fast.
It stores each term, which we call a \emph{monomial} (Section~\ref{sec:background}), by its few non-identity operators rather than as a full bit string, so a term on \(d\) sites needs \(O(d)\) memory rather than \(O(N)\).
To find the anticommuting monomials without testing the others, \code{monoprop} keeps an \emph{anticommutation index}: for each of the \(2N\) \emph{positions} (\(X_i\) and \(Z_i\) on a qubit, or the two Majorana operators of a mode), a list of the monomials that occupy it.
Anticommutation is the parity of the overlap between a monomial and a gate, so the anticommuting monomials are read off the few lists the gate selects, and monomials sharing no position with the gate are never touched.
The idea of the index can be summarised by what the reader does when they look for the proofs of this result in the appendices of this paper.
Instead of reading all the details in every section of the appendix, they look at the contents and navigate to the sections of interest.
In this example, you, the reader, are a quantum gate, the sections are the index lists and their subsections are the monomials, so you only read the sections of the positions you act on (care about), and you anticommute with the monomial subsections that interest you.
We note that the reader should examine the appendices carefully.

Figure~\ref{fig:kn-schematic} shows the intuition behind the advantage of this method.
A dense code (bottom) reads every monomial in full, while the index (top) reads only the lists of the positions a gate acts on.
When the system grows, their occupied positions are shared among more lists, so each list, and thus the work of a gate, shrinks, while the dense version grows linearly.
A gate on \(O(1)\) positions therefore costs \(O(K/N)\) rather than \(O(KN)\) when the monomials are spread evenly (Section~\ref{sec:inverted-index}), so the cost per tracked monomial counterintuitively falls as the system grows.
In other words, \code{monoprop} retrieves the monomials a gate anticommutes with, and rotates them, in time proportional to how many there are, rather than to how many are tracked.
In a kicked-Ising chain, for example, a gate anticommutes with \(15\%\) of the tracked monomials at \(N=32\) but only \(0.46\%\) at \(N=1024\) (Fig.~\ref{fig:inverse-scaling}(c)).
\code{monoprop} touches only these, and its measured cost per monomial falls close to \(1/N\), whereas a dense code checks \(100\%\) of the monomials at every gate.

\begin{figure}[t]
  \centering
  %% figures/kn-schematic.tex --- where the K/N comes from.
%% Included from main.tex inside a single-column figure; needs tikz with the
%% arrows.meta library.
%%
%% The same K = 12 degree-4 Majorana monomials (rows; dots mark occupied
%% positions) on N = 6 and N = 12 modes, with a local generator on positions
%% 5 and 6.
%%   top     the anticommutation index: each position has a list (bar above
%%           the column, one block per entry); the gate reads only its two
%%           columns. The 4K entries are shared among 2N lists, so doubling N
%%           halves what the gate reads: 8 -> 4 entries.
%%   bottom  a dense code asks every monomial, reading all 2N bits of each
%%           row: 144 -> 288 bits.
%%
%% Positions are uniformly random (seeded), not balanced, so list lengths vary
%% as in a real operator; the draw was chosen with gate reads equal to their
%% expected value, 2d at N = 6 and d at N = 12.
%% To change the example, edit \rowsA/\rowsB and the matching list lengths
%% \lensA/\lensB.
%%
%% Colours follow inverted-index-majorana.tex: qcol is what the index reads,
%% dcol (acol there) what the dense scan reads. No transparency or background
%% layers: they corrupted the PDF inside repeated macros.

\definecolor{qcol}{HTML}{1B5E9E}
\definecolor{dcol}{HTML}{B03A1A}

\begin{tikzpicture}[
    x=1cm, y=1cm,
    hd/.style  = {font=\scriptsize, inner sep=1pt},
    lab/.style = {font=\scriptsize\bfseries, inner sep=1pt, rotate=90},
    frame/.style = {draw=black!35, line width=0.3pt},
  ]

  % ---------------------------------------------------------------- geometry
  \def\c{0.20}                     % column pitch (one position)
  \def\r{0.15}                     % row pitch (one monomial)
  \def\K{12}
  \def\xa{0.55}                    % left edge of the N = 6 panels
  \def\xb{3.35}                    % left edge of the N = 12 panels
  \def\yi{0}                       % top of the index panels
  \def\yd{-2.55}                   % top of the dense panels
  \def\hp{0.085}                   % pitch of the list-length blocks

  % -------------------------------------------------------------------- data
  \def\rowsA{{3,4,5,12},{3,4,11,12},{2,3,4,9},{1,5,7,12},{1,3,10,11},
    {2,3,5,12},{3,5,7,8},{4,6,8,9},{6,7,10,11},{2,6,11,12},{5,8,10,11},
    {3,8,9,10}}
  \def\lensA{2,3,7,4,5,3,3,4,3,4,5,5}
  \def\rowsB{{9,16,23,24},{1,6,19,23},{2,8,9,13},{1,3,11,21},{11,15,20,21},
    {3,7,9,10},{6,13,22,24},{11,12,21,22},{4,7,22,24},{4,5,13,14},
    {4,13,16,17},{3,6,21,23}}
  \def\lensB{2,1,3,3,1,3,2,1,3,1,3,1,4,1,1,2,1,0,1,1,4,3,3,3}

  %% Dots of one panel: #1 x-origin, #2 y-top, #3 rows, #4 1 = index panel.
  \newcommand{\kndots}[4]{
    \foreach \m [count=\i] in #3 {
        \foreach \p in \m {
            \ifnum#4=1 \ifnum\p=5 \def\hit{1}\else\ifnum\p=6 \def\hit{1}\else
                  \def\hit{0}\fi\fi
            \else \def\hit{0}\fi
            \ifnum\hit=1
              \fill[qcol] ({#1+(\p-0.5)*\c},{#2-(\i-0.5)*\r}) circle (0.052);
            \else
              \fill[black!72] ({#1+(\p-0.5)*\c},{#2-(\i-0.5)*\r}) circle (0.038);
            \fi
          }
      }
  }

  %% Index panel: #1 x-origin, #2 positions, #3 rows, #4 list lengths.
  \newcommand{\knindex}[4]{
    \fill[qcol!16] ({#1+4*\c},{\yi+0.12+7*\hp}) rectangle ({#1+6*\c},{\yi-\K*\r});
    \draw[frame] (#1,\yi) rectangle ({#1+#2*\c},{\yi-\K*\r});
    \foreach \n [count=\j] in #4 {
        \ifnum\j=5 \def\bc{qcol}\else\ifnum\j=6 \def\bc{qcol}\else
              \def\bc{black!35}\fi\fi
        \ifnum\n>0
          \foreach \k in {1,...,\n}
          {\fill[\bc] ({#1+(\j-1)*\c+0.035},{\yi+0.06+(\k-1)*\hp})
            rectangle ({#1+\j*\c-0.035},{\yi+0.06+\k*\hp-0.018});}
        \fi
      }
    \kndots{#1}{\yi}{#3}{1}
  }

  %% Dense panel: #1 x-origin, #2 positions, #3 rows.
  \newcommand{\kndense}[3]{
    \foreach \i in {1,...,\K}
    {\fill[dcol!13] (#1,{\yd-(\i-1)*\r-0.015}) rectangle
      ({#1+#2*\c},{\yd-\i*\r+0.015});
      \draw[-{Stealth[length=2.2pt,width=2.2pt]},dcol,line width=0.45pt]
      ({#1-0.20},{\yd-(\i-0.5)*\r}) -- ({#1-0.03},{\yd-(\i-0.5)*\r});}
    \draw[frame] (#1,\yd) rectangle ({#1+#2*\c},{\yd-\K*\r});
    \kndots{#1}{\yd}{#3}{0}
  }

  % ------------------------------------------------------------------ panels
  \knindex{\xa}{12}{\rowsA}{\lensA}
  \knindex{\xb}{24}{\rowsB}{\lensB}
  \kndense{\xa}{12}{\rowsA}
  \kndense{\xb}{24}{\rowsB}

  % ------------------------------------------------------------------ labels
  \node[hd,anchor=south] at ({\xa+6*\c},{\yi+0.82}) {\(N=6\)};
  \node[hd,anchor=south] at ({\xb+12*\c},{\yi+0.82}) {\(N=12\)};

  \node[lab,text=qcol] at (0.08,{\yi-0.5*\K*\r+0.3}) {Index};
  \node[lab,text=dcol] at (0.08,{\yd-0.5*\K*\r}) {Dense};

  \node[hd,text=qcol,anchor=base] at ({\xa+6*\c},{\yi-\K*\r-0.26})
  {reads 8 entries};
  \node[hd,text=qcol,anchor=base] at ({\xb+12*\c},{\yi-\K*\r-0.26})
  {reads 4 entries (\(\times1/2\))};
  \node[hd,text=dcol,anchor=base] at ({\xa+6*\c},{\yd-\K*\r-0.26})
  {reads 144 bits};
  \node[hd,text=dcol,anchor=base] at ({\xb+12*\c},{\yd-\K*\r-0.26})
  {reads 288 bits (\(\times2\))};

\end{tikzpicture}
  \caption{\label{fig:kn-schematic} \textbf{K/N scaling intuition.}
    Twelve degree-4 Majorana monomials (rows) on \(N=6\) and \(N=12\) modes, with dots marking occupied positions.
    In the top panels, the gate reads only the two blue highlighted anticommutation index lists, whose lengths are shown by bars.
    Doubling the number of modes stretches each of the degree-\(4\) monomials among twice as many lists.
    Because the operator is evenly supported, these lists have similar lengths (Definition~\ref{def:even-occupancy}), so the gate now reads about half as many entries.
    By contrast, the dense version below reads everything, meaning we have double the work with double the modes.}
\end{figure}

\begin{figure*}[t!]
  \centering
  \includegraphics[width=0.9\textwidth]{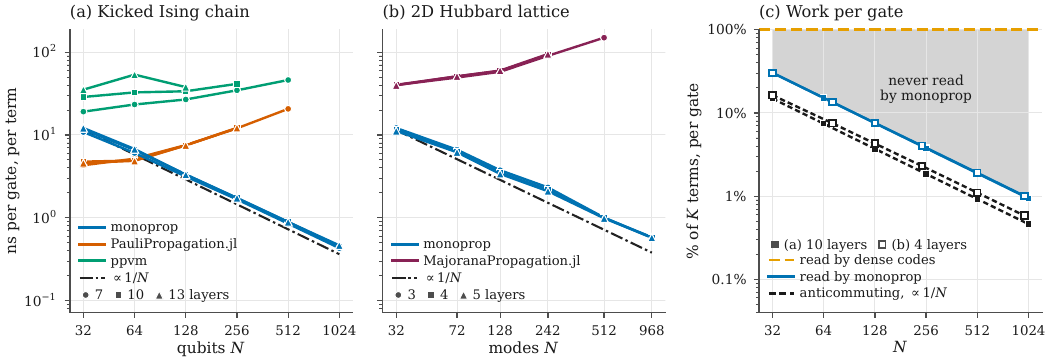}
  \caption{\textbf{Scaling with system size.} Single-core run time divided by \(\sum_g K_g\), the number of monomials held when each gate \(g\) is applied, summed over gates (Appendix~\ref{app:held}), with every code tracking the same number of monomials.
    Both models show the \(O(K/N)\) scaling of Theorem~\ref{thm:xor}.
    (a) Kicked-Ising chain at Pauli-weight cutoff 6 after 7, 10 and 13 layers (\(10^{6}\)--\(4\times10^{8}\) monomials), against the dense codes \code{PauliPropagation.jl}~\cite{rudolph_pauli_propagation_2025} and \code{ppvm}~\cite{ppvm}.
    (b) Two-dimensional Hubbard lattice at cutoff 3 after 3, 4 and 5 layers (\(10^{4}\)--\(10^{8}\) monomials), against the dense code \code{MajoranaPropagation.jl}~\cite{danna_majorana_strings_2026}.
    (c) Proportion of monomials that anticommute with a gate (black dashed), at 10 layers of (a) and 4 of (b). Dense codes test \(100\%\) of the monomials (orange), while \code{monoprop} reads at most about twice the anticommuting set.
    Runs over 30 minutes are omitted.}
  \label{fig:inverse-scaling}
\end{figure*}

The same data structures distribute across thousands of cores.
Monomials are distributed over processes by a hash that is linear under monomial multiplication, so each gate makes every process exchange data with only one other process.
On the Deucalion supercomputer, \code{monoprop} tracks \(9.47\times10^{10}\) monomials with near ideal parallel scaling.

Variational optimisation re-evaluates the same circuit at many angles.
A \emph{surrogate} records the propagation once and replays it cheaply at new angles~\cite{fontana_noisy_2025, lowesa_2023, rudolph_pauli_propagation_2025, majorana_propagation_2025}.
\code{monoprop}'s surrogate graph costs about one propagation to build, then returns an expectation value in \(\sim0.1\times\) and the gradient with respect to every parameter in \(\sim0.2\times\) the propagation time (Section~\ref{sec:graph-gradient}).
It uses the index to reconstruct the anticommuting monomials of each gate quickly, rather than storing them.

Finally, adaptive variational algorithms for quantum chemistry, such as ADAPT-VQE~\cite{grimsley_adapt_2019, feniou_greedy_2025}, grow a circuit one gate at a time, adding the gate and angle from a predefined candidate set (the \emph{pool}) that most lowers the energy.
For a pool of degree-\(\gdeg\) Majorana generators and an evolved Hamiltonian of degree \(d\), scoring every candidate directly costs \(O(N^{d+\gdeg})\).
For Fock reference states, we compute the gradient, optimal angle, and energy decrease of every candidate in \(O(N^{(d+\gdeg)/2})\) time, halving the exponent.

\section{Background}\label{sec:background}

\begin{figure*}[t]
  \centering
  \definecolor{qcol}{HTML}{1B5E9E}   % the lookup: positions read, lists traversed
\definecolor{acol}{HTML}{B03A1A}   % the answer: the anticommuting set
\definecolor{ccol}{HTML}{6B4E8C}   % visited by the lookup, but commutes

\begin{tikzpicture}[
    x=1cm, y=1cm,
    cell/.style  = {draw=black!30, line width=0.25pt},
    onbit/.style = {draw=black!30, line width=0.25pt, fill=black!72},
    gcell/.style = {draw=black!25, line width=0.25pt},
    gbit/.style  = {draw=black!30, line width=0.25pt, fill=black!72},
    qbox/.style  = {draw=qcol, line width=0.8pt},
    hd/.style    = {font=\scriptsize, inner sep=1pt},
    note/.style  = {font=\scriptsize, align=left, inner sep=1pt},
    tile/.style  = {draw=black!20, line width=0.4pt, rounded corners=1.2pt,
        fill=white},
  ]

  % ---------------------------------------------------------------- geometry
  \def\cw{0.90}                    % column pitch
  \def\gp{0.22}                    % gap between neighbouring modes
  \def\rh{0.56}                    % row pitch
  \def\gh{3.92}                    % depth of the grid: 7*\rh
  \def\gy{0.50}                    % centre of the generator row
  \def\lx{-0.66}                   % left edge of the row bands, and of the footer
  \def\vx{12.12}                   % the verdict glyphs
  \def\bx{12.58}                   % right edge of the row bands
  \def\cx#1{(#1-1)*\cw+int((#1-1)/2)*\gp+0.5*\cw}
  \def\cy#1{-((#1-1)*\rh+0.5*\rh)}
  \def\boxat#1#2{({\cx{#2}-0.5*\cw+0.08},{#1-0.5*\rh+0.06})
    rectangle ({\cx{#2}+0.5*\cw-0.08},{#1+0.5*\rh-0.06})}
  \def\cellat#1#2{\boxat{\cy{#1}}{#2}}

  %% The axis is one row deep: the position index doubles as the Majorana
  %% index, since position j simply is m_j in this engine, so m_4 names its own
  %% mode and a row of mode numbers above it would only repeat itself. The
  %% tiles are kept, since they carry the pairing of positions into modes.
  \def\ytop{1.26}                  % top of the tiles
  \def\ypos{1.05}                  % the positions, m_j
  \def\ybot{0.84}                  % foot of the tiles

  %% The footer band: three baselines under the grid, shared by the columnwise
  %% reading on the left and the legend at \kx on the right, so that the two
  %% blocks fill one band rather than hanging separately.
  \def\yla{-4.68}
  \def\ylb{-5.08}
  \def\ylc{-5.48}
  \def\kx{7.50}                    % left edge of the legend glyphs
  %% The legend starts one line higher, level with the list labels, so the
  %% band is two lines deep on the left rather than three.
  \def\kya{-4.28}
  \def\kyb{-4.68}
  \def\kyc{-5.08}

  % -------------------------------------------------------------------- data
  \def\occ{1/{3,7},2/{5,9},3/{4,7,10},4/{1,7,12},5/{3,4,8},6/{8,10},7/{5}}
  \def\gam{4,7}                    % \gamma = m_4m_7, and \qs{\gamma} = \gamma
  \def\hitrows{1,4,5}              % odd overlap: anticommutes, returned
  \def\evenrow{3}                  % even overlap: visited, commutes, cancels
  \def\greyrows{2,6,7}             % no overlap: commutes, never examined

  % ------------------------------------------- row states and columns read
  \begin{scope}[on background layer]
    \foreach \i in \greyrows
    {\fill[black!7] ({\lx},{\cy{\i}-0.5*\rh}) rectangle
        ({\bx},{\cy{\i}+0.5*\rh});}
    \foreach \i in \hitrows
    {\fill[acol!20] ({\lx},{\cy{\i}-0.5*\rh}) rectangle
        ({\bx},{\cy{\i}+0.5*\rh});}
    \foreach \i in \evenrow
    {\fill[ccol!22] ({\lx},{\cy{\i}-0.5*\rh}) rectangle
        ({\bx},{\cy{\i}+0.5*\rh});}
    \foreach \j in \gam {
      \fill[qcol,fill opacity=0.16] ({\cx{\j}-0.5*\cw},{\gy+0.5*\rh})
      rectangle ({\cx{\j}+0.5*\cw},{-\gh-0.08});
    }
  \end{scope}

  % ------------------------------------------------------------- the axis
  \foreach \s in {1,...,6} {
      \draw[tile] ({\cx{2*\s-1}-0.5*\cw+0.03},{\ybot})
      rectangle ({\cx{2*\s}+0.5*\cw-0.03},{\ytop});
      \node[hd] at ({\cx{2*\s-1}},{\ypos}) {\(m_{\the\numexpr2*\s-1\relax}\)};
      \node[hd] at ({\cx{2*\s}},{\ypos})   {\(m_{\the\numexpr2*\s\relax}\)};
    }

  % --------------------------------------------------------- the generator
  %% One strip serves as both the support and the lookup: an even-length
  %% Majorana generator reads exactly the positions it occupies.
  \node[hd,anchor=east] at (-0.20,{\gy}) {\(\gamma\)};
  \foreach \j in {1,...,12} {\draw[gcell] \boxat{\gy}{\j};}
  \foreach \j in \gam {\draw[gbit] \boxat{\gy}{\j};
    \draw[qbox] \boxat{\gy}{\j};}

  % ------------------------------------------------------- operator and matrix
  \foreach \i in {1,...,7} {
      \node[hd,anchor=east] at (-0.20,{\cy{\i}}) {\(\nu_{\i}\)};
      \foreach \j in {1,...,12} {\draw[cell] \cellat{\i}{\j};}
    }
  \foreach \i/\ps in \occ {\foreach \j in \ps {\draw[onbit] \cellat{\i}{\j};}}

  % ----------------------------------------------------------- the verdicts
  \foreach \i in \hitrows
  {\fill[acol] ({\vx},{\cy{\i}-0.10}) -- ({\vx},{\cy{\i}+0.10})
    -- ({\vx+0.22},{\cy{\i}}) -- cycle;}
  \foreach \i in \evenrow
  {\node[hd,text=ccol] at ({\vx+0.11},{\cy{\i}}) {\(\symd\)};}
  \foreach \i in \greyrows
  {\draw[black!45,line width=0.6pt] ({\vx},{\cy{\i}}) --
    ({\vx+0.22},{\cy{\i}});}

  % --------------------------------------------------- the columnwise reading
  \foreach \j in \gam
  {\node[hd,text=qcol,anchor=north] at ({\cx{\j}},{-\gh-0.20})
    {\(\ilist[\j]\)};}

  \node[note,anchor=west] at ({\lx},{\yla})
  {\(M_\gamma=\iu m_4m_7\) reads \(\qs{\gamma}=\gamma=\{4,7\}\)};
  \node[note,anchor=west] at ({\lx},{\ylb})
  {\(\ilist[4]\symd\ilist[7]
    = \{{\color{ccol}\nu_3},\nu_5\}
    \symd \{\nu_1,{\color{ccol}\nu_3},\nu_4\}
    = {\color{acol}\{\nu_1,\nu_4,\nu_5\} = \anti(\gamma)}\)};

  % ------------------------------------------ the legend, right of the footer
  %% Its three lines sit on their own baselines, one line above the footer's, and end flush with the right
  %% edge of the row bands, so the band reads as one block and not as a column
  %% of text hung beside the grid.
  \fill[acol] ({\kx},{\kya-0.10}) -- ({\kx},{\kya+0.10})
  -- ({\kx+0.22},{\kya}) -- cycle;
  \node[hd,anchor=west,text=acol] at ({\kx+0.32},{\kya})
  {odd overlap: anticommutes};
  \node[hd,text=ccol] at ({\kx+0.11},{\kyb}) {\(\symd\)};
  \node[hd,anchor=west,text=ccol] at ({\kx+0.32},{\kyb})
  {even overlap: commutes, cancels};
  \draw[black!45,line width=0.6pt] ({\kx},{\kyc}) -- ({\kx+0.22},{\kyc});
  \node[hd,anchor=west,text=black!55] at ({\kx+0.32},{\kyc})
  {no overlap: commutes, never examined};

\end{tikzpicture}
  \caption{\label{fig:inverted-index} \textbf{Anticommutation index} showing \(K=7\) Majorana monomials on \(N=6\) modes, with black occupied positions.
    The generator \(\gamma=\{4,7\}\) reads the lists of its two positions (shaded), whose symmetric difference is the anticommuting set (Eq.~\eqref{eq:index-query}).
    Monomials \(\nu_{2,6,7}\) occupy neither position, so commute and are never examined, while \(\nu_3\) occupies both, cancels, and commutes.
    Monomials \(\nu_{1,4,5}\) occupy exactly one and anticommute.
    The Pauli index is the same structure and differs only in the lists a generator reads (Fig.~\ref{fig:pauli-query}).}
\end{figure*}

We wish to estimate the expectation value \(\Tr[\varrho\,U^\dagger H U]\) of an observable \(H\) after a circuit \(U=U_L\cdots U_1\) acting on an initial state \(\varrho\).
Propagation evolves \(H\) backwards, first to \(U_L^\dagger H U_L\), then through \(U_{L-1}\), and so on, and evaluates the result against \(\varrho\) only at the end.
To do so, we expand \(H\) in an operator basis, \(H=\sum_\nu c_\nu M_\nu\), with Hermitian \(M_\nu\) satisfying \(M_\nu^2=1\), and track the operator basis and coefficients \(c_\nu\) as each gate is applied.

For qubits, the \(M_\nu\) are Pauli strings, tensor products of \(\{I, X, Y, Z\}\) on \(N\) qubits.
For fermions, they are products of Majorana operators \(\{m_{2i-1}, m_{2i}\}\) on \(N\) modes.
In both bases, each is a product of a few single-site operators, which we call a \emph{monomial}, and we refer to qubits and modes collectively as \emph{sites}.

Each site carries two \emph{positions}: on qubit \(i\), positions \(2i-1\) and \(2i\) stand for \(X_i\) and \(Z_i\) (the standard X/Z encoding), and on mode \(i\), for the Majorana operators \(m_{2i-1}\) and \(m_{2i}\).
A monomial is labelled by the positions it occupies, its \emph{support} \(\nu\subseteq[2N]=\{1,\ldots,2N\}\).
For Pauli strings, occupying both positions of a qubit gives \(Y_i=\iu X_iZ_i\).
Writing \(X(\nu),Z(\nu)\subseteq[N]\) for the positions of qubits with \(X\) or \(Z\) occupied, and \(y(\nu)\) for the number of doubly occupied sites, the Pauli monomial is
\begin{equation}
  M_\nu = \iu^{y(\nu)} \prod_{i\in X(\nu)} X_i \prod_{i\in Z(\nu)} Z_i .
  \label{eq:pauli-monomial}
\end{equation}
For fermions, each occupied position is a Majorana operator, and for \(\nu=\{j_1<\cdots<j_k\}\) the Majorana monomial is
\begin{equation}
  M_\nu = \iu^{k(k-1)/2}\, m_{j_1}m_{j_2}\cdots m_{j_k} .
  \label{eq:majorana-monomial}
\end{equation}
The phase makes these monomials Hermitian (Appendix~\ref{app:monomial-algebra}).
As \(\nu\) determines \(M_\nu\), we also call \(\nu\) a monomial, and use \(M\) in both bases.

The \emph{degree} \(\dg{\nu}\) of a monomial is the number of qubits a Pauli string acts on non-trivially (its weight), or the number of Majorana operators in a Majorana monomial (its length).
A Majorana monomial occupies exactly \(\dg{\nu}\) positions, while a Pauli \(Y_i\) occupies two but counts once, so a monomial of degree \(d\) occupies at most \(2d\) positions.
For example,
\begin{align*}
  Y_1Z_2     & :\quad
  \overbrace{\underset{X_1}{\bullet}\;\,\underset{Z_1}{\bullet}}^{\text{qubit }1}\quad
  \overbrace{\underset{X_2}{\circ}\;\,\underset{Z_2}{\bullet}}^{\text{qubit }2}
  \qquad \nu=\{1,2,4\},\quad \dg{\nu}=2, \\
  \iu m_1m_4 & :\quad
  \overbrace{\underset{m_1}{\bullet}\;\,\underset{m_2}{\circ}}^{\text{mode }1}\quad
  \overbrace{\underset{m_3}{\circ}\;\,\underset{m_4}{\bullet}}^{\text{mode }2}
  \qquad \nu=\{1,4\},\quad \dg{\nu}=2,
\end{align*}
where filled circles are occupied positions.
Both have degree two, but \(Y_1\) fills both positions of qubit 1.

A gate \(U=e^{-\iu \theta M_\gamma/2}\) is generated by a single monomial \(M_\gamma\).
It leaves monomials commuting with \(M_\gamma\) unchanged and rotates anticommuting ones as
\begin{equation}
  c_\nu M_\nu \rightarrow c_\nu \cos(\theta)\, M_\nu \pm c_\nu \sin(\theta)\, M_{\nu\symd\gamma},
\end{equation}
where the symmetric difference \(\nu\symd\gamma\), the \emph{partner} of \(\nu\), labels the product \(M_\gamma M_\nu\propto M_{\nu\symd\gamma}\).
Each rotation can double the number of tracked monomials, so simulations often discard monomials of degree above a cutoff \(d \ll N\), keeping \(K=O(N^d)\) terms, and monomials whose coefficient magnitude falls below a threshold.

A dense implementation usually stores each monomial as a \(2N\)-bit string in \(\lceil 2N/64\rceil\) 64-bit words, so it uses \(O(KN)\) memory and time per gate.
Most of the fastest existing implementations are dense, and we benchmark them in the next section.

\section{Results}\label{sec:results}

\subsection{Anticommutation index}\label{sec:inverted-index}
\begin{figure*}[t]
  \centering
  \includegraphics[width=0.8\textwidth]{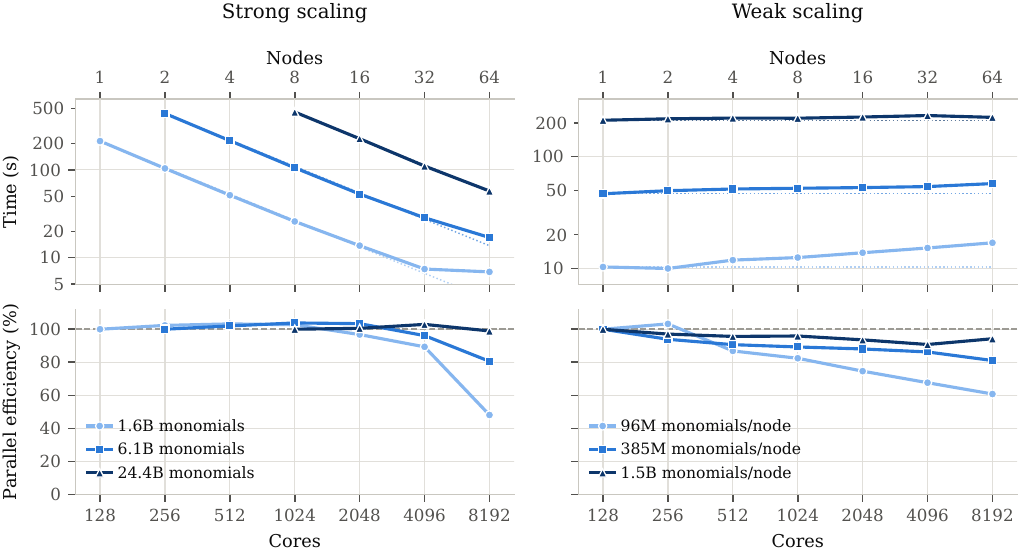}
  \caption{\textbf{Strong and weak scaling of \code{monoprop}.} Distributed propagation of the linear Hubbard model of Appendix~\ref{app:hubbard}, from 128 cores on one node to 8192 cores on 64 nodes.
    The strong scaling column fixes the number of tracked monomials at \(1.6\times10^{9}\), \(6.1\times10^{9}\), or \(2.44\times10^{10}\).
    The weak scaling column holds the monomials per node fixed at \(9.6\times10^{7}\), \(3.85\times10^{8}\), or \(1.48\times10^{9}\).
    The top row shows wall-clock time against the ideal scaling curves, and the bottom row reports the parallel efficiency.
    Larger problems maintain their efficiency to more cores.}
  \label{fig:mpi-scaling}
\end{figure*}

A gate changes only the monomials it anticommutes with, so finding them is the core of propagation, and where dense codes spend their \(O(KN)\) time.
Applying a gate \(M_\gamma\) therefore starts by finding the subset \(\anti(\gamma)\) of tracked monomials anticommuting with it.
Monomials either commute or anticommute with a gate.
Reordering \(M_\nu M_\gamma\) into \(M_\gamma M_\nu\) moves each single-site operator of \(M_\nu\) past each one of \(M_\gamma\), and every pair of single-site operators that anticommute flips the sign, so \(M_\nu M_\gamma=(-1)^{p}M_\gamma M_\nu\), where \(p\) counts these pairs, and they anticommute when \(p\) is odd.
Distinct Majorana operators anticommute, so each swap of Majoranas is a sign flip and \(p=|\nu||\gamma|-|\nu\cap\gamma|\), where \(|\cdot|\) counts occupied positions.
For an even-length generator, this reduces to the parity of the overlap \(|\nu\cap\gamma|\), for example, \(\iu m_1m_2\) and \(\iu m_2m_3\) share a Majorana and anticommute.
For Paulis, \emph{pairs of single-site operators} anticommute only when they act on the same site and differ, as \(X_i\) and \(Z_i\) do, so \(X_1X_2\) anticommutes with \(Z_1\) but commutes with \(Z_1Z_2\).

More generally, anticommutation is the parity of the overlap of \(\nu\) with a set \(\qs{\gamma}\subseteq[2N]\) fixed by the gate and basis, which we call its \emph{anticommutation support}.
For an even-length Majorana generator, \(\qs{\gamma}=\gamma\) (Fig.~\ref{fig:inverted-index}).
For a Pauli generator, \(\qs{\gamma}\) is \(\gamma\) with the \(X_i\) and \(Z_i\) positions exchanged on each site, so that an \(X_i\) in \(\nu\) overlaps a \(Z_i\) or \(Y_i\) in \(\gamma\) (Fig.~\ref{fig:pauli-query}).
For an odd-length Majorana generator, \(p\) has the parity of \(|\nu\setminus\gamma|\), so \(\qs{\gamma}\) is the complement \([2N]\setminus\gamma\) (Appendix~\ref{app:query-support}).
Thus,
\begin{equation}
  \nu \in \anti(\gamma)
  \iff
  \bigl|\nu \cap \qs{\gamma}\bigr| \equiv 1 \pmod 2.
  \label{eq:parity-test}
\end{equation}
Local Pauli and even Majorana rotation gates have \(|\qs{\gamma}|=O(1)\).
Odd Majorana generators have \(|\qs{\gamma}|=\Theta(N)\).

\begin{definition}[Anticommutation index]\label{def:inverted-index}
  Let \(\mathcal K_d\) be a set of \(K\) monomials of degree at most \(d\), so each has Pauli weight or Majorana length at most \(d\).
  The index is the \(2N\) lists
  \[
    \ilist[j] \coloneq \{\,\nu\in\mathcal K_d : j\in\nu\,\},
    \qquad j\in[2N].
  \]
\end{definition}
This is an example of an inverted index from information retrieval~\cite{manning_ir_2008}.
A monomial appears in the lists selected by \(\qs{\gamma}\) once for each overlapping position.
Their symmetric difference \(\symd\) retains the monomials with odd overlap, yielding \(\anti(\gamma)\) by Eq.~\eqref{eq:parity-test}.

\begin{theorem}\label{thm:xor}
  For a gate generator \(M_\gamma\) with anticommutation support \(\qs{\gamma}\), the index returns
  \begin{equation}
    \anti(\gamma) = \bigsymd_{j\,\in\,\qs{\gamma}} \ilist[j]
    \label{eq:index-query}
  \end{equation}
  in \(O(|\qs{\gamma}|\imax)\) time where \(\imax\coloneq\max_j|\ilist[j]|\) is the longest list.
  When \(d, |\qs{\gamma}| \in O(1)\), the lookup costs \(O(K)\) in the worst case and \(O(K/N)\) when the operator is evenly supported (Definition~\ref{def:even-occupancy}).
\end{theorem}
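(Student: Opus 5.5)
The proof has three parts: correctness of Eq.~\eqref{eq:index-query}, the generic running-time bound, and its two specialisations.

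\textbf{Correctness.} I would argue by counting multiplicity. Fix a tracked monomial \(\nu\in\mathcal K_d\). By Definition~\ref{def:inverted-index}, \(\nu\in\ilist[j]\) exactly when \(j\in\nu\). So among the lists indexed by \(j\in\qs{\gamma}\), \(\nu\) appears in exactly \(|\nu\cap\qs{\gamma}|\) of them. An iterated symmetric difference of sets contains an element iff that element lies in an odd number of the sets; this follows by a short induction on the number of sets, since \(\symd\) is associative and commutative. Hence \(\nu\) lies in the right-hand side iff \(|\nu\cap\qs{\gamma}|\) is odd, which by Eq.~\eqref{eq:parity-test} is exactly \(\nu\in\anti(\gamma)\). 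Monomials in no selected list have overlap zero, so they commute and are correctly absent.

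\textbf{Time bound.} I would commit to a concrete implementation, since the bound depends on it: a hash map (or a scratch marker array indexed by monomial id) that toggles membership. Each list \(\ilist[j]\) with \(j\in\qs{\gamma}\) is traversed once, and each entry flips its parity bit in \(O(1)\) expected time. Afterwards the entries with parity one are collected, either by iterating over the touched entries or by resetting the markers during a second pass. The total work is \(O\bigl(\sum_{j\in\qs{\gamma}}|\ilist[j]|\bigr)\le O(|\qs{\gamma}|\,\imax)\).

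\textbf{Specialisations.} In the worst case \(\imax\le K\) trivially, so the cost is \(O(K)\) when \(|\qs{\gamma}|=O(1)\). For the even case I would use double counting. Each monomial of degree at most \(d\) occupies at most \(2d\) positions, so
\[
\sum_{j\in[2N]}|\ilist[j]|=\sum_{\nu}|\nu|\le 2dK .
\]
The average list length is therefore at most \(dK/N\). The definition of even support is not yet stated, but I expect Definition~\ref{def:even-occupancy} to say that \(\imax\le C\cdot\frac{1}{2N}\sum_j|\ilist[j]|\) for a constant \(C\) independent of \(N\). I would invoke that bound to get \(\imax=O(dK/N)=O(K/N)\), and hence a lookup cost of \(O(|\qs{\gamma}|K/N)=O(K/N)\).

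\textbf{Main obstacle.} The main obstacle is this final step: it depends entirely on the form of Definition~\ref{def:even-occupancy}. If the definition instead bounds only the lists a gate actually reads, or bounds \(\imax\) in expectation, the same double-counting argument applies with the bound interpreted accordingly. A secondary point is keeping the symmetric-difference implementation \(O(1)\) per entry, so that it does not pick up a logarithmic factor or a \(K\)-sized reset. I would handle this by clearing only the touched markers.
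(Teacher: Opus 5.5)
Your proposal is correct and follows the paper's proof essentially step for step. For correctness you count that \(\nu\) appears in \(|\nu\cap\qs{\gamma}|\) of the selected lists and apply the parity test. For the cost you bound the work by \(|\qs{\gamma}|\imax\), and you specialise using \(\imax\le K\) in the worst case and, in the evenly supported case, the double-counting mean bound \(dK/N\) (the paper's Lemma~\ref{lem:occupancy}) together with Definition~\ref{def:even-occupancy}, which has exactly the form you anticipated. Your implementation details (toggle markers, clearing only the touched entries) just make explicit what the paper assumes when it equates the lookup cost with the number of entries read.
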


Since every anticommuting monomial lies in a selected list, the anticommuting set itself has at most \(|\qs{\gamma}|\imax\) monomials, \(O(K/N)\) for a local gate on an evenly supported operator.
The lookup cost therefore matches the size of the anticommuting set, while a dense scan costs \(O(KN)\) however few monomials anticommute.
The symmetric difference reads every entry of the selected lists, including commuting monomials of even overlap, such as \(\nu_3\) in Fig.~\ref{fig:inverted-index}, which is read twice and cancels.
The cost of the lookup is therefore set by how long the read lists are, and this is where the \(K/N\) scaling comes from.
A degree-\(d\) monomial appears in at most \(2d\) lists, so the \(2N\) lists hold at most \(2dK\) entries, on average at most \(dK/N\) each (Lemma~\ref{lem:occupancy}).
A gate reads \(|\qs{\gamma}|\) lists, so if they have similar lengths it reads \(O(|\qs{\gamma}|\,dK/N)\) entries.
At fixed \(K\), a dense scan grows as \(N\), while this shrinks as \(1/N\) (Fig.~\ref{fig:kn-schematic}).
Section~\ref{sec:single-core} demonstrates this in practice (Fig.~\ref{fig:inverse-scaling}(c)).
The average only bounds the longest list \(\imax\) when no list is much longer than the others, which we define here:
\begin{definition}[Evenly supported operator]\label{def:even-occupancy}
  An operator is \emph{evenly supported} when its longest list is within a constant factor \(c\) of the mean,
  \[
    \imax\le\frac{c}{2N}\sum_{j=1}^{2N}|\ilist[j]|.
  \]
\end{definition}
For example, if the monomials are picked at random, each position is about as likely to be occupied as any other, so the lists all end up with roughly the same length, as in Fig.~\ref{fig:kn-schematic}.
On the other hand, if every monomial contained \(Z_1\), the list for that position would hold all \(K\) monomials, and a gate reading that list, such as an \(X_1\) rotation, would read them all.

Both sources of sparsity the index takes advantage of, local gates (\(|\qs{\gamma}|=O(1)\)) and low-degree monomials (at most \(2d\) lists each), are basis-dependent, which is why \code{monoprop} propagates problems in their natural basis.
For standard fermion--qubit mappings, Pauli and Majorana monomials are related by linear maps over the binary field~\cite{chiew_ternary_2024}, so a monomial's degree in the other basis can be quickly computed when needed for truncation.
Converting, however, forfeits the \(O(K/N)\) scaling, because fermion-to-qubit encodings generally map local Majorana operators to Pauli strings whose weight grows with \(N\).
Under Jordan--Wigner~\cite{jordan_wigner}, a local Majorana generator acquires a parity string of weight \(O(N)\), and ternary-tree mappings~\cite{jiang_ternary_2020, miller_bonsai_2023, miller_treespilation_2026} reach only \(\Theta(\log N)\) average Pauli weight per Majorana.
Conversely, a single-qubit Pauli operator maps to a Majorana monomial of correspondingly large degree.

\subsection{Single-core scaling}\label{sec:single-core}

We compare \code{monoprop} with \code{PauliPropagation.jl} and \code{ppvm} (Appendix~\ref{app:versions}) on a trotterised kicked-Ising chain, and with \code{MajoranaPropagation.jl} on a trotterised two-dimensional Hubbard model in the Majorana basis.
In both cases, the observable is a magnetisation summed over all sites.
We report the time per gate per monomial, the run time divided by \(\sum_g K_g\), where \(K_g\) is the number of monomials the operator holds when gate \(g\) is applied (Appendix~\ref{app:held}).
This measures the work per gate across the evolution.

Figure~\ref{fig:inverse-scaling}(c) shows why \code{monoprop} gets faster per monomial as the system grows.
A dense code checks commutation of every monomial at each gate.
\code{monoprop} works almost only on the anticommuting set, which shrinks roughly as \(1/N\): for ten kicked-Ising layers, a gate anticommutes with \(15\%\) of the tracked monomials at \(N=32\) but only \(0.46\%\) at \(N=1024\).
To find the set, the index reads commuting monomials that overlap the gate an even number of times.
In these examples, this at most doubles the work, as the small gap between the \code{monoprop} lines and the anticommuting set line shows.

This shrinking set is what sets the time per monomial in Fig.~\ref{fig:inverse-scaling}(a) and (b).
On the kicked-Ising chain it falls at every depth as \(N^{-0.92}\) to \(N^{-0.94}\) for \code{monoprop}, close to the \(1/N\) of Theorem~\ref{thm:xor}, while it grows as \(N^{0.35}\) to \(N^{0.57}\) for \code{PauliPropagation.jl} and \(N^{0.05}\) to \(N^{0.31}\) for \code{ppvm}, each fitted over the range the codes complete.
On the Hubbard lattice it falls as \(N^{-0.83}\) to \(N^{-0.89}\) for \code{monoprop} and grows as \(N^{0.40}\) to \(N^{0.48}\) for \code{MajoranaPropagation.jl}, so the inverse scaling holds in both bases.
The ratio of dense to \code{monoprop} cost thus grows as \(N^{1.0}\) to \(N^{1.5}\), short of the asymptotic \(N^2\), because over this range the dense codes' \(O(N)\) test does not yet dominate their runtime.
\code{PauliPropagation.jl} is faster here up to \(N=64\) as for smaller system sizes the dense sorted vector implementation outperforms the sparse algebra and hashmap based implementation of \code{monoprop}.
At \(N=512\), \code{monoprop} is \(24\times\) faster than \code{PauliPropagation.jl} and \(54\times\) faster than \code{ppvm} after seven layers, and \(149\times\) faster than \code{MajoranaPropagation.jl} after three Hubbard layers.
It propagates \(3.7\times10^{8}\) monomials through thirteen kicked-Ising layers at \(N=1024\) in 868~s and \(2.0\times10^{8}\) through five Hubbard layers at \(N=968\) in 271~s, while at these sizes the reference codes exceed 30 minutes from seven and three layers, respectively.

\subsection{Distributed scaling}\label{sec:distributed-scaling}
Figure~\ref{fig:mpi-scaling} shows distributed propagation from 128 to 8192 cores on the x86 partition of the Deucalion supercomputer.
The benchmark is an open-boundary Hubbard chain of 60 lattice sites at half filling, with the spin-up occupation of lattice site 46 as the observable and cutoff \(d=10\).
We vary the number of tracked monomials through the coefficient threshold (Section~\ref{sec:background}).

In \emph{strong scaling} the problem size is fixed and the ideal wall-clock time \(\tau\) falls as \(1/C\) with core count \(C\).
In \emph{weak scaling} the problem grows with \(C\) and the ideal \(\tau\) is constant.
\emph{Parallel efficiency} is the fraction of this ideal achieved, \(\tau(C_0)C_0/[\tau(C)C]\) for strong scaling, normalised to the smallest run of each workload: 128 cores for \(1.6\times10^{9}\) monomials, 256 for \(6.1\times10^{9}\), and 1024 for \(2.44\times10^{10}\).
At 8192 cores (64 nodes), the three workloads retain 48\%, 81\%, and 99\% efficiency.
For weak scaling, relative to 128 cores, the workloads of \(9.6\times10^{7}\), \(3.85\times10^{8}\), and \(1.48\times10^{9}\) monomials per node retain 61\%, 81\%, and 94\% efficiency at 8192 cores.
The largest run tracks \(1.48\times10^{9}\) monomials per node, \(9.47\times10^{10}\) in total.

\begin{figure}[t]
  \centering
  \includegraphics[width=.85\linewidth]{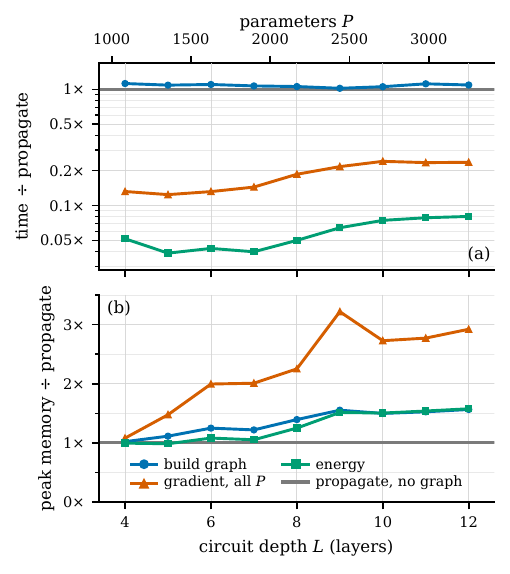}
  \caption{\label{fig:graph-gradient} \textbf{Building and evaluating the surrogate graph.}
    Single-core time and memory cost of building the surrogate graph with a single gradient and energy evaluation, relative to propagation of the same circuit with no graph.
    The circuit is a 127-qubit heavy-hex kicked-Ising model (Appendix~\ref{app:heavy-hex}), with Pauli-weight cutoff 6 and no coefficient truncation.
    Panel (a) shows the wall-clock time of one call and (b) the peak memory above the baseline.}
\end{figure}

\subsection{Surrogate graphs and gradients}\label{sec:graph-gradient}

Propagating the circuit again at each new set of angles repeats work that does not depend on the angles: finding the anticommuting monomials, and locating and inserting their partners in the hash table.
When monomials are truncated by degree only, the monomials and the pairs each gate rotates together are the same at every angle, and only the coefficients change.
A \emph{surrogate} exploits this by recording the propagation once and re-evaluating it at new angles.
Surrogates were introduced with the low-weight efficient simulation algorithm (LOWESA)~\cite{lowesa_2023,fontana_noisy_2025}, and have been implemented and studied in the context of Pauli~\cite{rudolph_pauli_propagation_2025, bermejo_qcnn_2026, lerch_patches_2026, monaco_symbolic_2025}, and Majorana propagation~\cite{majorana_propagation_2025, adapt-vmpe}.
Gradients can also be obtained without a stored surrogate, by back-propagating through the circuit~\cite{lin_backpropagating_2026}.
For each gate, \code{monoprop} stores only the coefficient-vector positions of the pairs the gate rotates together and the sign of each rotation.
Monomials that only receive a \(\cos\theta\) factor are found through the anticommutation index.
Replaying the graph at new angles applies each gate as a layer of \(2\times2\) rotations on the coefficient vector, with no lookups or insertions, and the same layers give the gradient with respect to every parameter in one forward and one backward sweep (Appendix~\ref{app:surrogate}).

Figure~\ref{fig:graph-gradient} shows the cost of each step relative to plain propagation of a trotterised 127-qubit heavy-hexagonal kicked-Ising circuit~\cite{kim_evidence_2023} up to 12 layers.
Building the graph costs about the same as propagation, so recording is cheap.
When built, evaluating the expectation value at new angles takes \(\sim0.1\times\) the propagation time, and the gradient with respect to every parameter \(\sim0.2\times\).
Building the graph keeps peak memory within \(1.0\)--\(1.6\times\) that of propagation.
The gradient evaluation needs more, reaching \(2.9\times\) at 12 layers, as its backward sweep uses two coefficient vectors.

\subsection{Optimal gates and angles}\label{sec:selection}

Adaptive variational quantum algorithms for quantum chemistry build circuits gate by gate, adding at each step the candidate from a predefined \emph{pool} that most lowers the energy, ranked by the energy gradient at zero angle or by the energy decrease at the optimal angle~\cite{grimsley_adapt_2019,adapt-vmpe,feniou_greedy_2025}.
The original pool of single and double excitations~\cite{grimsley_adapt_2019} consists of combinations of degree-2 and degree-4 Majorana monomials.
We consider rotations of Majorana operators of degree up to \(\gdeg\)~\cite{miller_treespilation_2026,adapt-vmpe}.
A pool of degree-\(\gdeg\) generators holds \(O(N^\gdeg)\) candidates, and scoring each against the \(K=O(N^d)\) terms of a degree-\(d\) observable costs \(O(N^{d+\gdeg})\).
For Fock reference states we halve this exponent.
To do this, we notice a Fock state is a combination of \emph{diagonal} monomials, products of mode parities \(\iu m_{2i-1}m_{2i}\), so only pairs of candidate and Hamiltonian monomials whose product is diagonal contribute, and we enumerate those pairs directly (Section~\ref{sec:selection-methods}).\footnote{The methods described here are not yet available in our open source release of \code{monoprop}.}

\begin{figure}[tb]
  \centering
  \includegraphics[width=0.85\linewidth]{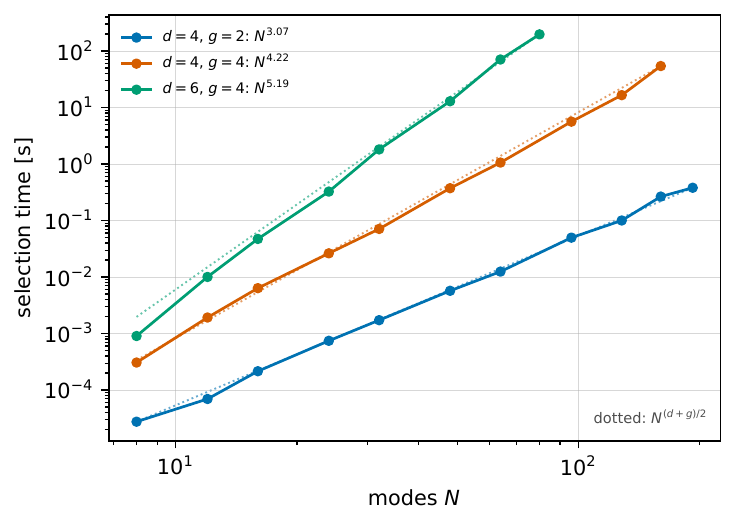}
  \caption{\label{fig:selection-scaling} \textbf{Optimal gate and angle generation.}
    Single-core timings for generation and scoring of generated candidate gates, with \((d,\gdeg)\in\{(4,2),(4,4),(6,4)\}\) for a chemistry Hamiltonian and Hartree--Fock reference.
    Legend exponents are fitted over the four largest system sizes.
    Dotted guides show \(N^{(d+\gdeg)/2}\).}
\end{figure}

\begin{theorem}[Optimal gates and angles]\label{thm:energy-selection}
  Let \(H\) be an even-parity Majorana operator, a sum of monomials of degree at most \(d\), and \(\varrho = \ketbra{n}{n}\) a Fock state on \(N\) modes.
  For even \(2\le\gdeg\le d\), one pass computes the energy gradient, optimal angle, and energy decrease of every gate whose generator places one Majorana on each of at most \(\gdeg\) distinct modes, in \(O(N^{(d+\gdeg)/2})\) time.
\end{theorem}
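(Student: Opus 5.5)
Write the candidate gate as \(U=e^{-\iu\theta M_\gamma/2}\) with \(\gamma\) of even length \(\gdeg\), and split the Hamiltonian as \(H=H_\gamma^{c}+H_\gamma^{a}\) into monomials that commute and anticommute with \(M_\gamma\). Under conjugation by \(U\), the rotation rule of Section~\ref{sec:background} gives
\[
  E(\theta)=A_\gamma+B_\gamma\cos\theta+C_\gamma\sin\theta ,
\]
with \(A_\gamma=\langle n|H_\gamma^{c}|n\rangle\), \(B_\gamma=\langle n|H_\gamma^{a}|n\rangle\), and \(C_\gamma=\sum_{\nu\in\anti(\gamma)}\pm c_\nu\langle n|M_{\nu\symd\gamma}|n\rangle\). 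The gradient at zero angle is \(C_\gamma\). Minimising over \(\theta\) gives the optimal angle \(\theta^\ast=\operatorname{atan2}(-C_\gamma,-B_\gamma)\). The minimum value is \(A_\gamma-\sqrt{B_\gamma^2+C_\gamma^2}\), so the energy decrease is \(B_\gamma+\sqrt{B_\gamma^2+C_\gamma^2}\). Since \(A_\gamma+B_\gamma=E(0)\) is independent of \(\gamma\), it suffices to compute, for every candidate, \(B_\gamma\) and \(C_\gamma\). This takes constant work per candidate after an accumulation pass. Note that the pool has \(O(N^{\gdeg})\le O(N^{(d+\gdeg)/2})\) elements because \(\gdeg\le d\), so the final per-candidate step fits the budget.

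The key fact is that a Fock state gives nonzero expectation only to diagonal monomials, those whose support is a union of full mode pairs \(\{2i-1,2i\}\). There \(\langle n|M_\mu|n\rangle=\pm1\) is determined by the occupations. Thus \(C_\gamma\) only receives contributions from \(\nu\in H\) with \(\nu\symd\gamma\) diagonal, and \(B_\gamma\) only from diagonal \(\nu\) anticommuting with \(\gamma\).

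For \(C_\gamma\), enumerate pairs from the Hamiltonian side. Because \(\gamma\) has one Majorana on each of its modes, \(\nu\symd\gamma\) diagonal forces the following: on each mode of \(\gamma\), \(\nu\) carries exactly the complementary Majorana or the full pair minus... precisely, \(\nu\) restricted to each mode of \(\gamma\) is a single Majorana, and everywhere else \(\nu\) is diagonal. Hence every Hamiltonian monomial \(\nu\) determines \(\gamma\) up to the choice of which of its singly occupied modes belong to \(\gamma\). More exactly, its singly occupied modes must all lie in \(\gamma\), and \(\gamma\) may also include modes where \(\nu\) is fully paired, converting a pair into a single.

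So the plan is to count pairs \((\nu,\gamma)\) with \(\nu\symd\gamma\) diagonal. Write \(\nu\) as \(s\) singles plus \(p\) full pairs with \(s+2p\le d\), and \(\gamma\) as the \(s\) singles plus \(t\) of the \(p\) paired modes, \(s+t\le\gdeg\). The data is parametrised by \(s+p+t\) free mode indices. Then \(s+p+t\le (s+2p)/2+(s+2t)/2\le d/2+\gdeg/2\), using \(p\le(d-s)/2\) and \(t\le \min(p,\gdeg-s)\). Hence there are \(O(N^{(d+\gdeg)/2})\) contributing pairs. We enumerate them by iterating over Hamiltonian terms grouped by their single-occupied pattern and extending by \(t\) modes, or equivalently by the enumeration of Section~\ref{sec:selection-methods}, accumulating \(\pm c_\nu\) into a hash keyed by \(\gamma\).

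For \(B_\gamma\), only diagonal \(\nu\) with odd overlap with \(\gamma\) contribute. Writing \(B_\gamma=\sum_{\nu\ \mathrm{diag}}c_\nu\langle M_\nu\rangle\,[\,|\nu\cap\gamma|\text{ odd}]\), I would compute it by inclusion over which modes of \(\gamma\) \(\nu\) touches. Since each mode contributes either \(0\) or \(1\) overlap, odd overlap means \(\nu\) covers an odd number of \(\gamma\)'s modes. The sum therefore becomes \(\sum_{S\subseteq\mathrm{modes}(\gamma),\,|S|\text{ odd}}\) of partial sums over diagonal \(\nu\) whose mode set meets \(\gamma\) exactly in \(S\). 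Using Möbius inversion over subsets, these reduce to marginals \(F(S)=\sum_{\nu\supseteq S}c_\nu\langle M_\nu\rangle\) for \(|S|\le\gdeg\). Diagonal monomials number \(O(N^{d/2})\), and the marginals with \(|S|\le\gdeg\) are computed within \(O(N^{d/2}\cdot 2^{d/2})\) total work. The inclusion–exclusion per candidate costs \(O(2^{\gdeg})=O(1)\).

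The main obstacle is making the pair enumeration for \(C_\gamma\) genuinely output-sensitive, so that no time is wasted on Hamiltonian terms that produce no valid \(\gamma\). I would index the Hamiltonian by its singles pattern so each extension is generated directly. I would also verify that the sign \(\pm\) from \(M_\gamma M_\nu\) and the Fock expectation are both computable in \(O(d)\) per pair.
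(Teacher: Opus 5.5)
\textbf{There is a genuine gap in the gradient part.} Your reduction to per-candidate coefficients is correct; your \(B_\gamma,C_\gamma\) are the paper's \(A_\gamma,B_\gamma\). Your diagonal part, with subset marginals \(F(S)\) and M\"obius inversion, is also correct and coincides with the paper's flip table \(W(\Lambda)\). The problems are in how you characterise, count and enumerate the gradient contributions.

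\textbf{Wrong structure and count.} Your structural claim is false. If \(\gamma\) contains a mode on which \(\nu\) holds both Majoranas, then \(\nu\symd\gamma\) holds a single Majorana there and is not diagonal. Your own preceding sentence already gives the correct statement: \(\sites(\gamma)\) is exactly the set of singly occupied modes of \(\nu\), so \(t=0\). Then \(\gamma\) is fixed up to the choice of Majorana on each of its modes. Only the choices sharing \(\nu\)'s Majorana on an odd number of modes anticommute, and your enumeration omits this parity filter. Your count is also not valid as written. The \(t\) modes are picked from modes already fixed by \(\nu\), so they are not free indices. Moreover, \(s+2t\le\gdeg\) does not follow from \(s+t\le\gdeg\). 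With \(t=0\) the correct count is \(s+p\le(d+s)/2\le(d+\gdeg)/2\), so the bound survives, but not by your argument.

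\textbf{Cost of the enumeration.} More seriously, your enumeration is driven from the Hamiltonian side. Iterating over tracked terms, or building an index of them by singles pattern, touches every one of the \(K\) monomials. \(K\) can be \(\Theta(N^d)\), which exceeds \(N^{(d+\gdeg)/2}\) whenever \(\gdeg<d\); for example \(N^4\) versus \(N^3\) at \(d=4,\gdeg=2\). You flag this as the main obstacle but do not resolve it, and deferring to the main-text sketch does not supply the argument.

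\textbf{The missing idea: generate pairs from the state side.} The paper enumerates each diagonal \(\delta\) by its mode set \(S\), together with:
\begin{itemize}
  \item an odd \(\omega\);
  \item \(Q\subseteq S\) with \(|Q|=\omega\);
  \item \(V\subseteq[N]\setminus S\) with \(|V|=\gdeg-\omega\);
  \item one Majorana on each mode of \(Q\cup V\).
\end{itemize}
Then \(\nu=\gamma\symd\delta\) is compatible by construction, and \(c_\nu\) costs one expected-\(O(1)\) probe of the coefficient hash table. Every compatible pair is produced exactly once and spans at most \((d+\gdeg)/2\) modes. No non-contributing tracked monomial is ever read.

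Equivalently, you could enumerate all structurally admissible supports \(\nu\) (at most \(\gdeg\) singles, pairs elsewhere, within the cutoff) and probe each. That must replace the scan over tracked terms, not supplement it. The same applies to your marginals: enumerate the \(O(N^{d/2})\) diagonal supports and probe them, rather than filtering \(H\) for diagonal terms.
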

Restricting to one Majorana per mode loses no candidates as a gate containing both Majoranas \(m_{2i-1}m_{2i}\) of a mode acts on a Fock state, up to a sign, as the lower-degree generator with the pair removed (Appendix~\ref{app:gate-selection}).
The resulting ranking is exact up to truncation errors in the evolved Hamiltonian.
The measured exponents \(3.1\), \(4.2\), and \(5.2\) match the predicted \((d+\gdeg)/2=3\), \(4\), and \(5\) (Fig.~\ref{fig:selection-scaling}).

\section{Methods}\label{sec:methods}

\subsection{Gate update}\label{sec:gate-update}

The Pauli and Majorana simulators share data structures and the propagation algorithm.
The basis enters only through the phase of a product (Appendix~\ref{app:multiplicative-phase}), the degree, and the anticommutation support \(\qs{\gamma}\).
We represent the operator as a vector of sparse monomials and coefficients, a hash table, which finds a monomial's position in these vectors in expected constant time, and an anticommutation index.
To save memory, each of the \(2N\) lists is stored either as a sparse list of entries or as a dense bitmap, whichever is smaller, similar to Roaring bitmaps~\cite{chambi_roaring_2016}.

We apply \(U = e^{-\iu\theta M_\gamma/2}\) in three steps:
\begin{enumerate}
  \item \textbf{Lookup.} Find \(\anti(\gamma)\) from the index (Theorem~\ref{thm:xor}) and scale by \(\cos\theta\).
  \item \textbf{Split.} A monomial \(\nu\) and its partner \(\nu\symd\gamma\) both anticommute with the generator, so \(\anti(\gamma)\) consists of pairs that must each be rotated once.
        We split each pair into a \emph{leader}, which omits a fixed position \(j^\star\in\gamma\), and a \emph{follower}, which contains it (Appendix~\ref{app:leader-follower}).
  \item \textbf{Rotate or branch.} For each leader, then each unmarked follower, one hash-table lookup finds its partner.
        If the partner is tracked, the two are rotated together and the follower is marked.
        Otherwise, the partner is added unless it is truncated.
\end{enumerate}
Each pair of anticommuting monomials costs one hash-table lookup and \(O(d+|\gamma|)\) further work, so when \(d, |\gamma| \in O(1)\) the cost per gate is
\begin{equation}
  \label{eq:time-per-gate}
  O\bigl(|\qs{\gamma}|\imax + |\anti(\gamma)|\bigr) = O(\imax).
\end{equation}
Table~\ref{tab:gate-cost} compares each step with a dense bitset.
On an evenly supported operator the lookup falls from \(O(KN)\) to \(O(|\qs{\gamma}|\,dK/N)\), rotating and merging each monomial costs \(O(d+|\gamma|)\) rather than \(O(N)\), and the memory per monomial falls from \(O(N)\) to \(O(d)\).

\begin{table}[t]
  \centering
  \footnotesize
  \setlength{\tabcolsep}{4pt}\renewcommand{\arraystretch}{1.35}
  \begin{tabular}{@{}>{\raggedright\arraybackslash}p{0.39\linewidth}ll@{}}
    \toprule
    Step                                                                            & Dense bitset  & \code{monoprop}                     \\
    \midrule
    Find the \(M_\nu\) anticommuting with \(M_\gamma\)                              & \(O(KN)\)     & \(O(|\qs{\gamma}|\,dK/N)\)          \\
    Rotate each, \(M_\nu\mapsto\cos\theta\,M_\nu\pm\sin\theta\,M_{\nu\symd\gamma}\) & \(O(N)\) each & \(O(d+|\gamma|)\) each              \\
    Merge \(M_{\nu\symd\gamma}\) into \(H\), truncating at degree \(d\)             & \(O(N)\) each & \(O(d+|\gamma|)\) each              \\
    \midrule
    Per gate                                                                        & \(O(KN)\)     & \(O(|\gamma|\,d\,(d+|\gamma|)K/N)\) \\
    Memory per monomial                                                             & \(O(N)\)      & \(O(d)\)                            \\
    \bottomrule
  \end{tabular}
  \caption{\label{tab:gate-cost} \textbf{Cost of propagating a gate} as \(H\mapsto e^{\iu\theta M_\gamma/2}  H e^{-\iu\theta M_\gamma/2}\), for \(K\) monomials of degree at most \(d\) on \(N\) sites.
    Here \(M_\gamma\) is the gate generator, \(|\gamma|\) the number of positions it occupies, and \(\nu\symd\gamma\) the support of the product \(M_\nu M_\gamma\).
    \code{monoprop} costs assume an evenly supported operator (Definition~\ref{def:even-occupancy}).}
\end{table}

\subsection{Parallelisation and distribution}\label{sec:distributed}

\code{monoprop} runs as several MPI processes, called \emph{ranks}, which may exist on different nodes and exchange data by messages.
Each rank runs several threads in shared memory, where the operator is divided among these threads into disjoint \emph{partitions}, each holding the data structures of Section~\ref{sec:gate-update} for its monomials.
A monomial \(\nu\) is owned by rank \(\rank(\nu)\) among the \(R\) ranks.
When \(R\) is a power of two, \(\rank(\nu)\) is the bitwise XOR (\(\oplus\)) of fixed labels assigned to the positions in \(\nu\).
This map is linear, \(\rank(\nu\symd\gamma)=\rank(\nu)\oplus\rank(\gamma)\), so under the gate generated by \(\gamma\) every monomial on rank \(r\) finds its partner on the single rank \(r\oplus\rank(\gamma)\) (Appendix~\ref{app:rank-routing}), similar to the bit-wise partitioning of parallel iterative qubit coupled cluster (iQCC) simulations~\cite{hosseini_parallel_iqcc_2026}.
Within a rank, a generic hash function spreads monomials evenly among partitions.
In the distributed benchmarks, each 128-core node runs 8 ranks of 16 cores with one partition per core, and each rank is pinned to one NUMA domain, so memory accesses stay local and fast.

\subsection{Gate selection}\label{sec:selection-methods}

Theorem~\ref{thm:energy-selection} rests on two ideas.
First, the energy under a candidate gate \(U=e^{-\iu\theta M_\gamma/2}\) is~\cite{nakanishi_sequential_2020,ostaszewski_structure_2021},
\[
  f_\gamma(\theta)=\Tr[\varrho\,U^\dagger H U]=A_\gamma\cos\theta+B_\gamma\sin\theta+C_\gamma,
\]
so the optimal angle and energy decrease follow from \(A_\gamma\) and \(B_\gamma\).
\(A_\gamma\) is half the energy change on flipping the occupations of the modes \(\gamma\) acts on, so it depends only on the diagonal part of \(H\).
Second, rather than checking the \(K\) monomials for contributions to \(B_\gamma\), we generate them from the reference state.
A Fock state is a sum of diagonal monomials, the only monomials with non-zero Fock-state expectation (Appendix~\ref{app:reference-states}), so a monomial \(M_\nu\) contributes to \(B_\gamma\) only when \(M_\gamma M_\nu\) is diagonal.
We therefore generate the diagonal products and the pairs \((\gamma,\nu)\) that can produce them directly.
Every mode such a pair touches carries exactly two of its at most \(d+\gdeg\) Majoranas, so the pair spans at most \((d+\gdeg)/2\) modes, which gives the bound.
Appendix~\ref{app:gate-selection} gives the derivation, algorithm, proof and additional details.

\section{Discussion}\label{sec:discussion}

The anticommutation index replaces the commutation check of every tracked monomial with a lookup in a few lists, lowering the per-gate cost for local gates on evenly supported operators from \(O(KN)\) to \(O(K/N)\), an asymptotically quadratic improvement in \(N\).
The same data structures distribute across thousands of cores and underlie the surrogate graph and gate selection.

The \(O(K/N)\) bound we observe numerically needs gates and monomials that occupy \(O(1)\) positions in the indexed basis, and an operator that stays evenly supported, which we do not prove is always the case.

How evenly the monomials spread depends on the observable and the dynamics.
An observable summed over the whole system, such as the magnetisation of Fig.~\ref{fig:inverse-scaling}, covers it from the start, while the monomials of a local observable stay within its light cone, and gates outside it read empty lists.
How the dynamics, for example how strongly they scramble, affect the list lengths is left to future work.
Even in the worst case, where the operator is concentrated on the few positions a gate reads, the lookup costs \(O(K)\) (Theorem~\ref{thm:xor}).
This is still a factor of \(N\) below the \(O(KN)\) of a dense scan, and in reality the cost may be amortised over the other gates in the circuit that act on emptier lists.

We have focused on performance and scalability, not on the accuracy of observable estimates, variational optimisation, state preparation, or integration with quantum algorithms.
The falling per-monomial cost makes it possible to tackle these problems at large system sizes.
Future work on \code{monoprop} will reduce the memory per monomial, for instance by replacing the hash table, which would also avoid its slow, scattered memory accesses.
Beyond propagation, these data structures for monomials may serve other latency-sensitive tasks, such as stabiliser simulation~\cite{aaronson_gottesman_2004}, real-time quantum error-correction decoding~\cite{battistel_realtime_2023}, or hybrid quantum--classical workloads.

\subsection*{Acknowledgements}
The author thanks Roberto Di Remigio Eikås, Matteo Rossi, and the Algorithmiq software development team for valuable discussions and their effort in open-sourcing \code{monoprop}.
The author thanks Tommaso Tufarelli, Adam Glos, Guillermo García-Pérez, Manuel Rudolph and Zoë Holmes for valuable scientific discussions and comments on the paper.
The author thanks Guillermo García-Pérez for pointing out the monomial representation of Pauli strings.
The author acknowledges the EuroHPC Joint Undertaking for awarding access to Deucalion, hosted by FCCN, Portugal (EHPC-DEV-2026D08-260) and to Leonardo, hosted by CINECA, Italy (EHPC-DEV-2026D09-033).

\subsection*{AI Usage Statement}
The author used Claude to improve clarity and to simplify long-winded proofs and for presentation.
Data structures, algorithms, and proofs were conceived by the author.
Claude also helped implement the code and benchmarks.
AI was not used to generate scientific content, references, or conclusions, and the author checked all content for accuracy.

\subsection*{Competing Interests}
Elements of this work are included in patents filed by
Algorithmiq Ltd with the European and US Patent Offices.

\bibliographystyle{unsrt}

\clearpage
\onecolumn
\appendix

\thispagestyle{empty}
\vspace*{\fill}
\begin{center}
  {\Large\bfseries Supplementary Material for \code{monoprop}: Scalable Pauli and Majorana propagation via anticommutation indices\par}
  \vspace{2em}
  {\normalsize Aaron Miller\par}
  \vspace{0.6em}
  {\small
    \textit{Algorithmiq S.r.l., Via della Chiusa 15, 20123 Milan, Italy}\par
    \textit{School of Physics, Trinity College Dublin, College Green, Dublin 2, Ireland}\par}
\end{center}
\vspace*{\fill}
\startcontents[supplementary]
\section*{Contents}
\printcontents[supplementary]{}{1}{\setcounter{tocdepth}{2}}
\clearpage

\section{Notation and conventions}\label{app:notation}

The following conventions apply to Pauli and Majorana bases, with \([k]\coloneq\{1,\ldots,k\}\).

\begin{itemize}
  \item \textbf{Sites and positions.} A register has \(N\) physical sites: fermionic modes in the Majorana basis or qubits in the Pauli basis.
        Each site \(i\) has two positions, \(2i-1\) and \(2i\), labelling the operators \(m_{2i-1},m_{2i}\) or \(X_i,Z_i\), respectively.
        Positions therefore label operator factors, not additional physical sites, and a monomial may occupy neither, one, or both positions of a site.
        In the Pauli basis these four choices represent \(I_i,X_i,Z_i,Y_i\).
        We use \(j\) for a position and \(\ell\) for a gate or surrogate layer.
  \item \textbf{Monomials.} A monomial \(M_\nu\) is an ordered product of the operators selected by \(\nu\subseteq[2N]\), with a fixed phase making it Hermitian (Appendix~\ref{app:monomial-algebra}).
        Its support \(\supp(M_\nu)=\nu\) records the selected positions, while \(\sites(\nu)\subseteq[N]\) records the sites containing them.
        The \(4^N\) such monomials form an operator basis, so an observable is expanded as \(H=\sum_\nu c_\nu M_\nu\), and the empty support gives \(M_\varnothing=I\).
        As in Section~\ref{sec:background}, \(\nu\) also denotes the monomial.
  \item \textbf{Symmetric difference.} We write \(\symd\) for symmetric difference of occupied-position sets and of monomial sets such as index lists.
        We reserve \(\oplus\) for bitwise XOR of rank labels.
  \item \textbf{Degree and cutoff.} The degree \(\dg{\nu}\) is Majorana length \(|\nu|\) or Pauli weight \(|\sites(\nu)|\).
        Cardinality counts positions, so in the Pauli basis \(\dg{\nu}\le|\nu|\le2\dg{\nu}\).
        The tracked family \(\mathcal K_d\) contains \(K\) monomials of degree at most \(d\). The generator degree \(\gdeg\) (Appendix~\ref{app:gate-selection}) is distinct from this cutoff.
\end{itemize}

\section{Majorana and Pauli monomials}\label{app:monomial-algebra}
\paragraph{Majorana monomials}
As in Eq.~\eqref{eq:majorana-monomial}, a Hermitian Majorana monomial is
\begin{equation}
  M_\nu = \iu^{\binom{|\nu|}{2}}\, m_{j_1} m_{j_2} \cdots m_{j_{|\nu|}},
  \qquad j_1 < j_2 < \cdots < j_{|\nu|},
\end{equation}
where the phase makes the ordered product Hermitian.
Two Majorana monomials satisfy
\begin{equation}
  M_\nu M_\mu = (-1)^{|\nu||\mu| + |\nu \cap \mu|} M_\mu M_\nu ,
  \label{eq:majorana-commutation}
\end{equation}
so, when at least one of the two lengths is even, they anticommute exactly when the overlap \(|\nu \cap \mu|\) is odd.
We usually consider even-parity fermionic systems, such as the Hubbard model (Appendix~\ref{app:hubbard}), in which the observable and the generators contain only even-length monomials.

\paragraph{Pauli monomials}
With \(X(\nu)\), \(Z(\nu)\), and \(y(\nu) = |X(\nu) \cap Z(\nu)|\) as in Eq.~\eqref{eq:pauli-monomial}, a Hermitian Pauli monomial is
\begin{equation}
  M_\nu = \iu^{y(\nu)} \prod_{i \in X(\nu)} X_i \prod_{i \in Z(\nu)} Z_i,
\end{equation}
with all \(X\) factors written before all \(Z\) factors.
The prefactor turns each doubly occupied site into \(Y_i = \iu X_i Z_i\), so the weight \(\dg{\nu} = |X(\nu) \cup Z(\nu)|\) satisfies \(|\nu| = \dg{\nu} + y(\nu)\).
Writing \(\mu^{\times}\) for \(\mu\) with the two positions within every site swapped, two Pauli monomials satisfy
\begin{equation}
  M_\nu M_\mu = (-1)^{|\nu \cap \mu^{\times}|} M_\mu M_\nu ,
  \qquad
  |\nu \cap \mu^{\times}| = |Z(\nu) \cap X(\mu)| + |X(\nu) \cap Z(\mu)| ,
  \label{eq:pauli-commutation}
\end{equation}
so they anticommute exactly when the exchanged overlap is odd.

\paragraph{Trace and orthogonality}
In either basis, taking the trace over the full \(2^N\)-dimensional Hilbert space gives
\begin{equation}
  \Tr[M_\nu] = 2^N\delta_{\nu,\varnothing},
  \qquad
  \Tr[M_\nu M_\mu] = 2^N\delta_{\nu,\mu},
  \label{eq:monomial-trace}
\end{equation}
where \(\delta\) is the Kronecker delta on monomial labels.

\subsection{Monomial multiplication}\label{app:multiplicative-phase}

A branching gate multiplies a tracked monomial by its generator.
In both bases the product of two monomials is, up to a phase, the monomial labelled by their symmetric difference,
\begin{equation}
  M_\nu M_\mu = \chi(\nu, \mu)\, M_{\nu\symd\mu},
  \qquad
  \chi(\nu, \mu) = \chi_{\mathrm h}(\nu,\mu)\, \chi_{\mathrm o}(\nu,\mu) .
  \label{eq:monomial-product}
\end{equation}
The phase factorises into the ordering sign \(\chi_{\mathrm o}(\nu,\mu) \in \{\pm 1\}\), from restoring the canonical operator order, and the phase \(\chi_{\mathrm h}(\nu,\mu)\), from the Hermitian prefactors of the three monomials involved.

\paragraph{Majorana phase}
The prefactors of \(M_\nu\), \(M_\mu\), and \(M_{\nu\symd\mu}\) give
\begin{equation}
  \chi_{\mathrm h}(\nu,\mu) = \iu^{\binom{|\nu|}{2} + \binom{|\mu|}{2} - \binom{|\nu\symd\mu|}{2}} \in \{\pm 1, \pm \iu\} .
\end{equation}
In the product of the two ordered Majorana strings, each Majorana of the second factor must be moved left past every Majorana of the first that stands to its right, and each such swap contributes a minus sign because
\begin{equation}
  m_j m_{j'} = - m_{j'} m_j, \qquad j \neq j' .
\end{equation}
The total number of crossings is therefore the number of \emph{inversions} of the word formed by concatenating the ordered positions of \(\nu\) and \(\mu\):
\begin{equation}
  \invcount(\nu,\mu) = \bigl|\{\,(j,j') \in \nu \times \mu \;:\; j > j' \,\}\bigr| ,
  \label{eq:crossing-count}
\end{equation}
of which only the parity matters:
\begin{equation}
  \chi_{\mathrm o}(\nu,\mu) = (-1)^{\invcount(\nu,\mu)} .
  \label{eq:reordering-sign}
\end{equation}
A position shared by the two monomials contributes \(m_j^2 = I\).

\paragraph{Pauli phase}
The prefactors of \(M_\nu\), \(M_\mu\), and \(M_{\nu\symd\mu}\) give
\begin{equation}
  \chi_{\mathrm h}(\nu,\mu) = \iu^{\,y(\nu) + y(\mu) - y(\nu\symd\mu)} \in \{\pm 1, \pm \iu\} .
\end{equation}
In the product of the two \(X\)-then-\(Z\) strings, restoring this order moves the \(X\) factors of \(\mu\) past the \(Z\) factors of \(\nu\).
Factors on different sites commute, and each site in \(Z(\nu)\cap X(\mu)\) contributes \(ZX=-XZ\), so
\begin{equation}
  \chi_{\mathrm o}(\nu,\mu) = (-1)^{|Z(\nu)\,\cap\,X(\mu)|} .
  \label{eq:pauli-sign}
\end{equation}

\subsection{Reference states}\label{app:reference-states}

We contract the back-propagated observable against a computational-basis state (Pauli basis) or a Fock state (Majorana basis), with occupations \(n_i\in\{0,1\}\).
Both reference states have the form
\begin{equation}
  \varrho = \ketbra{n_1\ldots n_N}
  = \frac{1}{2^N}\prod_{i=1}^{N}\bigl(I + \zeta_i D_i\bigr),
  \label{eq:product-reference}
\end{equation}
where the local operators \(D_i\) commute, square to the identity, and have reference-state eigenvalues \(\zeta_i\).

\paragraph{Majorana basis}
Here \(D_i=\iu\,m_{2i-1}m_{2i}\) and \(\zeta_i=(-1)^{n_i+1}\).
The \emph{diagonal monomials} occupy either both positions of each mode or neither.
A monomial on \(k\) modes has length \(2k\), giving \(\binom{N}{d/2}\) possible diagonal monomials of exact degree \(d\) for even \(d\), and none for odd \(d\).

\paragraph{Pauli basis}
Here \(D_i=Z_i\) and \(\zeta_i=(-1)^{n_i}\).
Here the diagonal monomials contain only \(I\) and \(Z\).
One on \(k\) sites has weight \(k\), giving \(\binom{N}{d}\) possible diagonal monomials of exact degree \(d\).

\paragraph{Contraction}
In both bases, expanding the product gives
\begin{equation}
  \varrho = \frac{1}{2^N}\sum_{S \subseteq [N]} \Bigl(\prod_{i \in S} \zeta_i\Bigr) \prod_{i \in S} D_i .
  \label{eq:product-reference-expansion}
\end{equation}
Thus \(\varrho=\sum_\nu w_\nu M_\nu\) has coefficient \(w_\nu=2^{-N}\prod_{i\in\sites(\nu)}\zeta_i\) on diagonal monomials and zero otherwise.
For \(H_L=\sum_\nu c_\nu M_\nu\), trace orthogonality (Eq.~\eqref{eq:monomial-trace}) gives
\begin{equation}
  \langle H_L \rangle
  = \Tr[\varrho H_L]
  = 2^N \sum_{\nu} c_\nu w_\nu
  = \sum_{\nu \text{ diagonal}} c_\nu \prod_{i \in \sites(\nu)} \zeta_i ,
\end{equation}
where the final sum runs only over tracked diagonal monomials.
The expectation value therefore needs only the tracked diagonal monomials, and the reference-state expansion is never stored.
This implicit treatment applies to Heisenberg-picture propagation.
In the Schrödinger picture, which \code{monoprop} also supports, the degree-truncated reference state is evolved explicitly.

\section{The anticommutation index}\label{app:anticommutation-index}

This appendix proves Theorem~\ref{thm:xor} for the index of Definition~\ref{def:inverted-index}.
Let the tracked monomials \(\mathcal{K}_d\) obey a fixed degree cutoff \(d\ge1\), so \(\dg{\nu} \le d\) for every \(\nu \in \mathcal{K}_d\) and \(K = |\mathcal K_d| = O(N^d)\).
For a generator \(\gamma\), write
\[
  \anti(\gamma)
  \coloneq
  \bigl\{\,\nu \in \mathcal{K}_d : \{M_\nu, M_\gamma\} = 0 \,\bigr\}
\]
for the \emph{anticommuting set} of tracked monomials.
The generator's anticommutation support \(\qs{\gamma}\) is defined by the parity test of Eq.~\eqref{eq:parity-test}, with basis-specific forms in Appendix~\ref{app:query-support}.

\subsection{Occupancy bounds}\label{app:occupancy}

\begin{lemma}[Occupancy bounds]\label{lem:occupancy}
  Let \(\mathcal{K}_d\) contain \(K\) monomials of degree at most \(d\).
  Then
  \begin{equation}
    \sum_{j=1}^{2N}|\ilist[j]|\le 2dK,
    \qquad
    \frac{1}{2N}\sum_{j=1}^{2N}|\ilist[j]|\le\frac{dK}{N}.
    \label{eq:total-occupancy}
  \end{equation}
  Thus the index has \(O(K)\) entries at fixed \(d\), and the mean list occupancy is at most \(dK/N\).
  The peak occupancy obeys
  \begin{equation}
    \frac{K-1}{2N} \;\le\; \imax \;\le\; \min\bigl\{K,\ O(N^{d-1})\bigr\} .
    \label{eq:occupancy-sandwich}
  \end{equation}
\end{lemma}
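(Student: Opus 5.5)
The plan is to prove every bound in Lemma~\ref{lem:occupancy} by double counting the incidences \((\nu,j)\) with \(j\in\nu\), using nothing beyond Definition~\ref{def:inverted-index} and the fact (Appendix~\ref{app:notation}) that \(|\nu|\le 2\dg{\nu}\le 2d\).

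\textbf{Total and mean occupancy.} By Definition~\ref{def:inverted-index}, a monomial \(\nu\) appears in \(\ilist[j]\) exactly when \(j\in\nu\). Counting incidences by lists and by monomials gives
\[
  \sum_{j=1}^{2N}|\ilist[j]| = \sum_{\nu\in\mathcal K_d}|\nu|.
\]
In the Majorana basis \(|\nu|=\dg{\nu}\le d\). In the Pauli basis \(|\nu|=\dg{\nu}+y(\nu)\le 2d\). Either way the sum is at most \(2dK\), and dividing by \(2N\) gives the mean bound \(dK/N\). At fixed \(d\) this also shows the index has \(O(K)\) entries.

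\textbf{Lower bound on \(\imax\).} Only the identity has empty support, and supports label monomials uniquely, so \(\mathcal K_d\) contains at most one monomial with \(|\nu|=0\). Every other monomial contributes at least one incidence. Hence \(\sum_j|\ilist[j]|\ge K-1\), and the longest of the \(2N\) lists satisfies \(\imax\ge (K-1)/(2N)\) by pigeonhole.

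\textbf{Upper bound on \(\imax\).} The bound \(\imax\le K\) holds because each list is a subset of \(\mathcal K_d\). For the \(O(N^{d-1})\) bound, I fix a position \(j\) and count every monomial of degree at most \(d\) that contains \(j\), whether tracked or not.
\begin{itemize}
  \item In the Majorana basis, such a monomial is \(\{j\}\) together with a subset of \([2N]\setminus\{j\}\) of size at most \(d-1\). There are \(\sum_{k=0}^{d-1}\binom{2N-1}{k}=O\bigl((2N)^{d-1}\bigr)\) of these.
  \item In the Pauli basis, the site of \(j\) is occupied in one of two ways: \(j\) alone, or both positions (a \(Y\)). At most \(d-1\) further sites are then chosen among the remaining \(N-1\), each carrying one of \(X\), \(Y\) or \(Z\). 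This gives at most \(2\sum_{k=0}^{d-1}\binom{N-1}{k}3^k=O(3^{d}N^{d-1})\).
\end{itemize}
At fixed \(d\) both counts are \(O(N^{d-1})\), so \(|\ilist[j]|\) is at most the minimum of \(K\) and this quantity.

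There is no real obstacle here. The only point needing care is the last step: the hidden constant depends on \(d\) (through \(2^{d-1}\) or \(3^{d}\)), so the statement should be read at fixed \(d\), consistent with \(K=O(N^d)\) in Appendix~\ref{app:anticommutation-index}.
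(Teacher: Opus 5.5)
Your proof is correct. The total-occupancy, mean and lower bounds are the same double-counting argument the paper uses. You depart from the paper only in the \(O(N^{d-1})\) bound on \(\imax\).

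You count directly, basis by basis, the monomials of degree at most \(d\) that contain a fixed position \(j\). This gives \(\sum_{k<d}\binom{2N-1}{k}\) in the Majorana basis and \(2\sum_{k<d}\binom{N-1}{k}3^k\) in the Pauli basis.

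The paper avoids basis-specific counting. It considers the family \(\mathcal K_{\mathrm{full}}\) of all monomials of degree at most \(d\), which is invariant under site permutations and under swapping the two positions of a site. These symmetries act transitively on the \(2N\) positions, so every list of \(\mathcal K_{\mathrm{full}}\) has the same length. Each list therefore equals the mean, which the first part of the lemma already bounds by \(dK_{\mathrm{full}}/N=O(N^{d-1})\). The inclusion \(\mathcal K_d\subseteq\mathcal K_{\mathrm{full}}\) then finishes the argument.

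The symmetry argument is shorter and treats both bases uniformly. It also shows that \(\mathcal K_{\mathrm{full}}\) is evenly supported with \(c=1\), which the paper uses right after the lemma. It does rely on the separate fact \(K_{\mathrm{full}}=O(N^d)\).

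Your count needs no such input and gives explicit list lengths, which are exact for \(\mathcal K_{\mathrm{full}}\). Because those lengths do not depend on \(j\), your count also implicitly contains the same evenness conclusion.
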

\begin{proof}
  Each monomial \(\nu\) contributes exactly \(|\nu|\) entries, so
  \[
    \sum_{j=1}^{2N}|\ilist[j]|=\sum_{\nu\in\mathcal K_d}|\nu|.
  \]
  Each monomial occupies at most \(2d\) positions, so the right-hand side is at most \(2dK\), and dividing by the \(2N\) positions gives Eq.~\eqref{eq:total-occupancy}.
  Every monomial apart from the identity contributes at least one entry, so the total is at least \(K-1\) and the longest list satisfies \((K-1)/(2N) \leq \imax\).

  For the upper bound, a list holds at most \(K\) monomials, with equality when every tracked monomial occupies that position.
  Now consider the family \(\mathcal K_{\mathrm{full}}\) of all monomials of degree at most \(d\), with \(K_{\mathrm{full}}=|\mathcal K_{\mathrm{full}}|=O(N^d)\).
  This family is invariant under permutations of the sites and under exchanging the two positions of any one site, since both preserve Majorana length and Pauli weight.
  These symmetries map any position to any other, so all lists of the full basis have equal occupancy, which by the bound above is at most \(dK_{\mathrm{full}}/N\).
  Since \(\mathcal K_d\subseteq\mathcal K_{\mathrm{full}}\), its lists are no longer than those of the full basis, so for every \(j\),
  \[
    |\ilist[j]|\le\frac{dK_{\mathrm{full}}}{N}=O(N^{d-1}).
  \]
  Combining the two bounds gives \(\imax\le\min\{K,O(N^{d-1})\}\).
\end{proof}
Spreading \(O(dK)\) entries over \(2N\) positions gives an \(O(K/N)\) mean, the smallest possible scale by the lower bound.
The family \(\mathcal K_{\mathrm{full}}\) attains it, and so does any evenly supported operator (Definition~\ref{def:even-occupancy}), for which \(\imax=\Theta(K/N)\).

\subsection{Anticommutation lookup}\label{app:query}

\begin{proof}[Proof of Theorem~\ref{thm:xor}]
  A monomial \(\nu\) appears in one selected list for every position in \(\nu\cap\qs{\gamma}\).
  The symmetric difference keeps \(\nu\) exactly when this number is odd, which is the anticommutation test of Eq.~\eqref{eq:parity-test}, so the lookup returns \(\anti(\gamma)\).
  Each selected list contains at most \(\imax\) entries, so the \(|\qs{\gamma}|\) lists selected contain at most \(|\qs{\gamma}|\imax\) entries in total, giving the stated bound.
  For \(|\qs{\gamma}|=O(1)\), \(\imax\le K\) gives the worst-case \(O(K)\) bound, while an evenly supported operator gives \(O(\imax)=O(K/N)\).
\end{proof}

The bound \(O(|\qs{\gamma}|\imax)\) is stated in the occupancy of the lists read.
Monomials appearing in an even number of selected lists cancel in the symmetric difference, so the output cardinality alone does not bound the lookup cost, even when the operator is evenly supported.
When \(\qs{\gamma}=\{j\}\), nothing cancels and the lookup enumerates exactly \(\anti(\gamma)=\ilist[j]\).
When \(|\qs{\gamma}| = \Theta(N)\), the lookup traverses at most the whole index, \(O(dK) = O(K)\) by Lemma~\ref{lem:occupancy}, and the improvement over a scan disappears.

\subsection{Anticommutation supports}\label{app:query-support}
\begin{figure}[t!]
  \centering
  %% figures/inverted-index-pauli.tex --- the same index, read by the exchange.
%% Included from appendix.tex inside a figure environment; needs tikz with the
%% backgrounds and arrows.meta libraries and the notation macros \ilist, \qs,
%% \anti of the main preamble.
%%
%% The companion to the Majorana figure, and deliberately the same picture:
%% same grid width, same row pitch, same verdict column, same footer band with
%% the legend at \kx. Only one thing is added, and it is the whole content of
%% the figure --- the generator occupies \gamma but reads \gamma^{\times},
%% because the two positions of a site are X_i and Z_i, which anticommute.
%% The index itself is untouched.
%%
%% The example: N = 4 sites, so 2N = 8 positions, position 2i-1 being X_i and
%% position 2i being Z_i; cutoff d = 2; K = 7 monomials, few enough that each
%% is named and every verdict can be checked by hand. The generator is Z_2X_4,
%% occupying positions {4,7}, whose exchange is {3,8} = {X_2, Z_4}. Note that
%% \nu_4 = Z_4 anticommutes with it while sharing no position with it at all,
%% which is what the exchange is for.
%%
%% The axis carries the labels X_i, Z_i and not the position indices: X_2 and
%% Z_2 name their own site, and the exchange is read off the arrows rather than
%% off the arithmetic j -> j^{\times}. The two positions the lookup names are
%% still identified below the grid, by the \ilist[j] labels under their columns.
%% The parameters and the two complexity readings live in the caption.
%%
%% To change the example, edit the data macros below; the geometry derives from
%% \cw, \gp and \rh.

\definecolor{qcol}{HTML}{1B5E9E}   % the lookup: positions read, lists traversed
\definecolor{acol}{HTML}{B03A1A}   % the answer: the anticommuting set
\definecolor{ccol}{HTML}{6B4E8C}   % visited by the lookup, but commutes

\begin{tikzpicture}[
    x=1cm, y=1cm,
    cell/.style  = {draw=black!30, line width=0.25pt},
    onbit/.style = {draw=black!30, line width=0.25pt, fill=black!72},
    gcell/.style = {draw=black!25, line width=0.25pt},
    gbit/.style  = {draw=black!30, line width=0.25pt, fill=black!72},
    qbox/.style  = {draw=qcol, line width=0.8pt},
    hd/.style    = {font=\scriptsize, inner sep=1pt},
    note/.style  = {font=\scriptsize, align=left, inner sep=1pt},
    tile/.style  = {draw=black!20, line width=0.4pt, rounded corners=1.2pt,
        fill=white},
    xarr/.style  = {draw=qcol, line width=0.55pt,
          -{Stealth[length=3.6pt,width=2.8pt]}},
  ]

  % ---------------------------------------------------------------- geometry
  %% \cw is chosen so that the grid is as wide as the Majorana figure's, and
  %% \rh, \vx, \bx and the footer baselines are shared with it outright.
  \def\cw{1.405}                   % column pitch
  \def\gp{0.22}                    % gap between neighbouring sites
  \def\rh{0.56}                    % row pitch
  \def\gh{3.92}                    % depth of the grid: 7*\rh
  \def\gygam{2.04}                 % centre of the generator support strip
  \def\yarr{1.48}                  % height of the exchange arrows
  \def\gyqs{0.92}                  % centre of the lookup strip
  \def\lx{-1.85}                   % left edge of the row bands, and of the footer
  \def\nux{-1.06}                  % right edge of the \nu_i = column
  \def\vx{12.12}                   % the verdict glyphs
  \def\bx{12.58}                   % right edge of the row bands
  \def\cx#1{(#1-1)*\cw+int((#1-1)/2)*\gp+0.5*\cw}
  \def\cy#1{-((#1-1)*\rh+0.5*\rh)}
  \def\boxat#1#2{({\cx{#2}-0.5*\cw+0.12},{#1-0.5*\rh+0.06})
    rectangle ({\cx{#2}+0.5*\cw-0.12},{#1+0.5*\rh-0.06})}
  \def\cellat#1#2{\boxat{\cy{#1}}{#2}}

  %% The axis is one row deep, as in the Majorana figure: one tile per site,
  %% holding the two labels that site's positions carry.
  \def\ytop{3.04}                  % top of the tiles
  \def\ypau{2.83}                  % the two labels of each site
  \def\ybot{2.62}                  % foot of the tiles

  %% The footer band: three baselines under the grid, shared by the columnwise
  %% reading on the left and the legend at \kx on the right, so that the two
  %% blocks fill one band rather than hanging separately.
  \def\yla{-4.68}
  \def\ylb{-5.08}
  \def\ylc{-5.48}
  \def\kx{7.50}                    % left edge of the legend glyphs

  % -------------------------------------------------------------------- data
  %% \occ: the tracked supports. \nm: their names as Pauli monomials.
  \def\occ{1/{1,2},2/{3,7},3/{4,6},4/{8},5/{3,4},6/{5,8},7/{3,8}}
  \def\nm{1/{Y_1},2/{X_2X_4},3/{Z_2Z_3},4/{Z_4},5/{Y_2},6/{X_3Z_4},7/{X_2Z_4}}
  \def\gam{4,7}                    % \gamma, the positions Z_2X_4 occupies
  \def\qsg{3,8}                    % \qs{\gamma} = \gamma^{\times} = {X_2, Z_4}
  \def\hitrows{2,4,5,6}            % odd overlap: anticommutes, returned
  \def\evenrow{7}                  % even overlap: visited, commutes, cancels
  \def\greyrows{1,3}               % no overlap: commutes, never examined

  % ------------------------------------------- row states and columns read
  \begin{scope}[on background layer]
    \foreach \i in \greyrows
    {\fill[black!7] ({\lx},{\cy{\i}-0.5*\rh}) rectangle
        ({\bx},{\cy{\i}+0.5*\rh});}
    \foreach \i in \hitrows
    {\fill[acol!20] ({\lx},{\cy{\i}-0.5*\rh}) rectangle
        ({\bx},{\cy{\i}+0.5*\rh});}
    \foreach \i in \evenrow
    {\fill[ccol!22] ({\lx},{\cy{\i}-0.5*\rh}) rectangle
        ({\bx},{\cy{\i}+0.5*\rh});}
    \foreach \j in \qsg {
      \fill[qcol,fill opacity=0.16] ({\cx{\j}-0.5*\cw},{\gyqs+0.5*\rh})
      rectangle ({\cx{\j}+0.5*\cw},{-\gh-0.08});
    }
  \end{scope}

  % ------------------------------------------------------------- the axis
  %% The two positions of a site carry X_i and Z_i, which anticommute with one
  %% another: that single fact is what forces the exchange below.
  \foreach \s in {1,...,4} {
      \draw[tile] ({\cx{2*\s-1}-0.5*\cw+0.03},{\ybot})
      rectangle ({\cx{2*\s}+0.5*\cw-0.03},{\ytop});
      \node[hd] at ({\cx{2*\s-1}},{\ypau}) {\(X_{\s}\)};
      \node[hd] at ({\cx{2*\s}},{\ypau})   {\(Z_{\s}\)};
    }

  % ------------------------------- the generator and the lookup it must make
  \node[hd,anchor=east] at (-0.20,{\gygam}) {\(\gamma\)};
  \foreach \j in {1,...,8} {\draw[gcell] \boxat{\gygam}{\j};}
  \foreach \j in \gam     {\draw[gbit]  \boxat{\gygam}{\j};}

  \node[hd,anchor=east,text=qcol] at (-0.20,{\gyqs}) {\(\qs{\gamma}\)};
  \foreach \j in {1,...,8} {\draw[gcell] \boxat{\gyqs}{\j};}
  \foreach \j in \qsg     {\draw[qbox]  \boxat{\gyqs}{\j};}

  %% Each arrow stays inside its own site, so the exchange is visibly a swap of
  %% the two positions of one site and needs no label.
  \draw[xarr] ({\cx{4}},{\gygam-0.5*\rh}) .. controls
    ({\cx{4}},{\yarr}) and ({\cx{3}},{\yarr}) .. ({\cx{3}},{\gyqs+0.5*\rh});
  \draw[xarr] ({\cx{7}},{\gygam-0.5*\rh}) .. controls
    ({\cx{7}},{\yarr}) and ({\cx{8}},{\yarr}) .. ({\cx{8}},{\gyqs+0.5*\rh});

  % ------------------------------------------------------- operator and matrix
  %% Two columns: the \nu_i names line up on the right edge of their own column
  %% at \nux, and the monomials sit flush against the grid.
  \foreach \i/\n in \nm {
    \node[hd,anchor=east] at ({\nux},{\cy{\i}}) {\(\nu_{\i} ={}\)};
    \node[hd,anchor=east] at (-0.20,{\cy{\i}}) {\(\n\)};
  }
  \foreach \i in {1,...,7}
    {\foreach \j in {1,...,8} {\draw[cell] \cellat{\i}{\j};}}
  \foreach \i/\ps in \occ {\foreach \j in \ps {\draw[onbit] \cellat{\i}{\j};}}

  % ----------------------------------------------------------- the verdicts
  \foreach \i in \hitrows
  {\fill[acol] ({\vx},{\cy{\i}-0.10}) -- ({\vx},{\cy{\i}+0.10})
    -- ({\vx+0.22},{\cy{\i}}) -- cycle;}
  \foreach \i in \evenrow
  {\node[hd,text=ccol] at ({\vx+0.11},{\cy{\i}}) {\(\symd\)};}
  \foreach \i in \greyrows
  {\draw[black!45,line width=0.6pt] ({\vx},{\cy{\i}}) --
    ({\vx+0.22},{\cy{\i}});}

  % --------------------------------------------------- the columnwise reading
  \foreach \j in \qsg
  {\node[hd,text=qcol,anchor=north] at ({\cx{\j}},{-\gh-0.20})
    {\(\ilist[\j]\)};}

  \node[note,anchor=west] at ({\lx},{\yla})
  {\(M_\gamma = Z_2X_4\) occupies \(\gamma = \{4,7\}\),
    but \(X_i\) and \(Z_i\) anticommute,};
  \node[note,anchor=west,text=qcol] at ({\lx},{\ylb})
  {so it reads \(\qs{\gamma} = \gamma^{\times} = \{3,8\}\)};
  \node[note,anchor=west] at ({\lx},{\ylc})
  {\(\ilist[3]\symd\ilist[8]
    = \{\nu_2,\nu_5,{\color{ccol}\nu_7}\}
    \symd \{\nu_4,\nu_6,{\color{ccol}\nu_7}\}
    = {\color{acol}\{\nu_2,\nu_4,\nu_5,\nu_6\} = \anti(\gamma)}\)};

  % ------------------------------------------ the legend, right of the footer
  %% Its three lines share the footer baselines and end flush with the right
  %% edge of the row bands, so the band reads as one block and not as a column
  %% of text hung beside the grid.
  \fill[acol] ({\kx},{\yla-0.10}) -- ({\kx},{\yla+0.10})
  -- ({\kx+0.22},{\yla}) -- cycle;
  \node[hd,anchor=west,text=acol] at ({\kx+0.32},{\yla})
  {odd overlap: anticommutes};
  \node[hd,text=ccol] at ({\kx+0.11},{\ylb}) {\(\symd\)};
  \node[hd,anchor=west,text=ccol] at ({\kx+0.32},{\ylb})
  {even overlap: commutes, cancels};
  \draw[black!45,line width=0.6pt] ({\kx},{\ylc}) -- ({\kx+0.22},{\ylc});
  \node[hd,anchor=west,text=black!55] at ({\kx+0.32},{\ylc})
  {no overlap: commutes, never examined};

\end{tikzpicture}
  \caption{\label{fig:pauli-query} \textbf{Pauli anticommutation index.}
    The grid has \(N=4\) sites, cutoff \(d=2\), and \(K=7\) tracked monomials.
    The generator \(M_\gamma=Z_2X_4\) occupies \(\gamma=\{4,7\}\) but reads \(\qs{\gamma}=\gamma^{\times}=\{3,8\}\), the lists of \(X_2\) and \(Z_4\).
    Black cells mark occupied positions.
    The exchange finds \(\nu_4=Z_4\) and \(\nu_6=X_3Z_4\), which share no position with \(\gamma\) but anticommute with it, while \(\nu_7=X_2Z_4\) appears in both lists and cancels.}
\end{figure}

The index is the same in both bases. Only the positions a generator reads differ:
\begin{equation}
  \qs{\gamma}=
  \begin{cases}
    \gamma              & \text{Majorana, } |\gamma| \text{ even}, \\
    \gamma^{\times}     & \text{Pauli},                            \\
    [2N]\setminus\gamma & \text{Majorana, } |\gamma| \text{ odd}.
  \end{cases}
  \label{eq:query-support-cases}
\end{equation}
An even-length Majorana generator reads its own occupied positions because Eq.~\eqref{eq:majorana-commutation} makes anticommutation the parity of \(|\nu\cap\gamma|\).
In the Pauli basis, the exchange \(\gamma\mapsto\gamma^{\times}\) swaps the \(X_i\) and \(Z_i\) positions of each site, so an \(X_i\) in the generator reads the \(Z_i\) list, a \(Z_i\) reads the \(X_i\) list, and a \(Y_i\) reads both.
A site then contributes an odd overlap exactly when the two local factors anticommute.
For example, \(Y_i\) against \(Y_i\) gives two hits, which cancel.
Hence \(|\nu\cap\gamma^{\times}|\) has the parity of the number of sites with anticommuting factors, as in Eq.~\eqref{eq:pauli-commutation}.
For a bounded-weight Pauli generator, the lookup reads \(O(1)\) lists and costs \(O(\imax)\).
For an odd-length Majorana generator,
\[
  \{M_\nu,M_\gamma\}=0
  \iff |\nu|+|\nu\cap\gamma|\text{ is odd}
  \iff |\nu\setminus\gamma|\text{ is odd}.
\]
The last condition reads the complement \([2N]\setminus\gamma\), so the exact lookup can traverse the whole index and costs \(\Theta(K)\) in the worst case.

\section{Propagation algorithm}\label{app:propagation}

We formalise the propagation step of Section~\ref{sec:background}.
Given a Hermitian operator \(H=\sum_\nu c_\nu M_\nu\), a state \(\varrho\), and a circuit
\begin{equation}
  \mathcal C_L(\theta)=U_L(\theta_L)\cdots U_1(\theta_1),
  \qquad
  U_\ell(\theta_\ell)=e^{-\iu\theta_\ell M_{\gamma_\ell}/2},
\end{equation}
the target expectation value is
\begin{equation}
  f_L(\theta)=\Tr[\varrho\,  \mathcal C_L(\theta)^\dagger H \mathcal C_L(\theta)].
\end{equation}
We back-propagate \(H\) in the Heisenberg picture, which suits observables with a low-degree expansion.
Forward propagation of \(\varrho\) is equivalent up to truncation.

\code{monoprop} limits branching with a degree cutoff \(d\) (\(K=O(N^d)\)) and a coefficient threshold \(\epsilon\).
A branch onto \(\nu'\) from \(\nu\) at angle \(\theta\) is \emph{truncated} unless \(|c_\nu\sin\theta|>\epsilon\) and either \(\dg{\nu'}\le d\) or \(\nu'\) is diagonal (Appendix~\ref{app:reference-states}).
Diagonal partners are exempt from the degree cutoff because they are exactly the monomials that contribute to expectation values in a product reference state.
They are usually few in the simulated regimes, but including them improves accuracy.
Because exempt diagonal monomials can exceed degree \(d\), the occupancy bounds of Appendix~\ref{app:occupancy} hold strictly only with \(d\) replaced by the largest tracked degree.

A gate with generator \(M_\gamma\) and angle \(\theta\) leaves a commuting monomial unchanged and branches an anticommuting one:
\begin{equation}
  c_\nu M_\nu \xrightarrow{U}
  \begin{cases}
    c_\nu M_\nu                                                                     & \text{if } [M_\nu, M_{\gamma}] = 0,                             \\
    c_\nu\cos\theta\, M_\nu + \phi_\gamma(\nu)\,c_\nu\sin\theta\,M_{\gamma\symd\nu} & \text{if } \{M_\nu, M_{\gamma}\} = 0 \text{ and not truncated}, \\
    c_\nu\cos\theta\, M_\nu                                                         & \text{if } \{M_\nu, M_{\gamma}\} = 0 \text{ and truncated},
  \end{cases}
  \label{eq:mp-dispatch}
\end{equation}
where \(M_\gamma M_\nu=\chi(\gamma,\nu)M_{\gamma\symd\nu}\) by Eq.~\eqref{eq:monomial-product}, and the \emph{rotation sign} of an anticommuting \(\nu\) is
\begin{equation}
  \phi_\gamma(\nu)\coloneq\iu\,\chi(\gamma,\nu)\in\{\pm1\},
  \qquad
  \phi_\gamma(\nu\symd\gamma)=-\phi_\gamma(\nu).
  \label{eq:rotation-sign}
\end{equation}
The branch is added to the partner's coefficient, \(c_{\gamma\symd\nu} \rightarrow c_{\gamma\symd\nu} + \phi_\gamma(\nu)\,c_\nu\sin\theta\), with \(c_{\gamma\symd\nu}=0\) if the partner is not yet tracked.
A tracked partner is itself anticommuting and branches back onto \(\nu\) in the same way, so the two contributions add.
Starting from \(H\), backward propagation applies the gates in reverse circuit order, \(U_L\) first and \(U_1\) last, combining coefficients of the same monomial after each gate, and ends with \(H_L\coloneq\mathcal C_L(\theta)^\dagger H\,\mathcal C_L(\theta)\) up to truncation.

Algorithm~\ref{alg:expectation-value} computes the expectation value by back-propagating \(H\) and contracting the result with the product reference state \(\varrho\).
The operator is held in the data structures of Section~\ref{sec:gate-update}, including the anticommutation index \(\ilist\).
The cutoff \(d\) and threshold \(\epsilon\) are fixed as input.
The pseudocode is for a single process holding the operator.
For multiple processes, each process stores its owned monomials and data structures, performs the lookup locally, and batches partner requests to the owning rank (Appendix~\ref{app:rank-routing}).
All ranks follow the same gate order, and their local expectation-value contributions are combined.

Algorithm~\ref{alg:apply-gate} updates only the anticommuting terms returned by the index lookup.

\begin{algorithm}[H]
  \caption{\textsc{ExpectationValue}\((H,\mathcal C_L(\theta),\varrho;d,\epsilon)\)}
  \label{alg:expectation-value}
  \begin{algorithmic}[1]
    \State initialise the monomial vector, hash table, and \(\ilist\) from \(H\)
    \For{\(\ell=L,\ldots,1\)}
    \State \(\Call{ApplyGate}{\gamma_\ell,\theta_\ell}\)
    \EndFor
    \State \Return \(\sum_{\nu\ \mathrm{diagonal}} c_\nu \prod_{i\in\sites(\nu)}\zeta_i\) \Comment{Appendix~\ref{app:reference-states}}
  \end{algorithmic}
\end{algorithm}

\subsection{Leader--follower split}\label{app:leader-follower}
For a gate generator \(\gamma\), the partner of a tracked monomial \(\nu\) is \(\nu'=\nu\symd\gamma\), and the relation is symmetric, \(\nu = \nu' \symd \gamma\).
The symmetric difference flips membership of the positions in \(\gamma\) and leaves all other positions unchanged.
Choose any element of \(\gamma\) as the pivot, for example \(j^\star=\min\gamma\).
Because \(j^\star\in\gamma\), the gate flips its membership:
\[
  j^\star\in\nu' \quad\Longleftrightarrow\quad j^\star\notin\nu.
\]
Thus two partners cannot both contain the pivot or both omit it.
Call the endpoint without the pivot the \emph{leader} and the endpoint with it the \emph{follower}.

For example, take Majorana monomials \(\gamma=\{1,2\}\) and \(\nu=\{2,4\}\), with \(j^\star=1\), which gives
\[
  \nu'=\underbrace{\{2,4\}}_{\nu\text{ (leader)}}\symd
  \underbrace{\{1,2\}}_{\gamma}
  =\underbrace{\{1,4\}}_{\text{follower}}.
\]
Applying the same symmetric difference again returns the leader: \(\{1,4\}\symd\{1,2\}=\{2,4\}\).

The index gives this split directly.
Since \(M_{\nu'}\propto M_\gamma M_\nu\) anticommutes with \(M_\gamma\) exactly when \(M_\nu\) does, both endpoints of an anticommuting pair lie in \(\anti(\gamma)\), which splits into the leaders \(\anti(\gamma)\setminus\ilist[j^\star]\) and the followers \(\anti(\gamma)\cap\ilist[j^\star]\).
We process the leaders first.
If a leader's partner is tracked, it is a follower, and we rotate the pair and mark the follower.
Otherwise, we add the partner unless it is truncated.
We then treat the unmarked followers, whose leaders are absent, in the same way.
Each anticommuting monomial is therefore probed at most once, and each tracked pair is rotated once.

\begin{algorithm}[H]
  \caption{\textsc{ApplyGate}\((\gamma,\theta)\)}
  \label{alg:apply-gate}
  \begin{algorithmic}[1]
    \State \(\anti(\gamma)\gets\bigsymd_{j\in\qs{\gamma}}\ilist[j]\), \quad \(j^\star\gets\min\gamma\) \Comment{lookup (Theorem~\ref{thm:xor})}
    \For{\(\nu\in\anti(\gamma)\setminus\ilist[j^\star]\), then unmarked \(\nu\in\anti(\gamma)\cap\ilist[j^\star]\)} \Comment{leaders, then followers}
    \State \(\nu'\gets\nu\symd\gamma\), \quad \(\phi\gets\phi_\gamma(\nu)\) \Comment{Eq.~\eqref{eq:rotation-sign}}
    \If{\(\nu'\) is tracked} \Comment{one probe, one rotation per pair}
    \State mark \(\nu'\); \quad \((c_\nu,c_{\nu'})\gets(c_\nu\cos\theta-\phi\,c_{\nu'}\sin\theta,\ c_{\nu'}\cos\theta+\phi\,c_\nu\sin\theta)\)
    \ElsIf{\(|c_\nu\sin\theta|>\epsilon\) and (\(\dg{\nu'}\le d\) or \(\nu'\) diagonal)} \Comment{branch}
    \State append \(\nu'\) to the monomial vector, hash table and \(\ilist\) with \(c_{\nu'}\gets\phi\,c_\nu\sin\theta\); \quad \(c_\nu\gets c_\nu\cos\theta\)
    \Else \Comment{truncated}
    \State \(c_\nu\gets c_\nu\cos\theta\)
    \EndIf
    \EndFor
  \end{algorithmic}
\end{algorithm}

\section{Partitioning and Linear Routing}\label{app:rank-routing}
In both bases, monomials are assigned to \(R\) MPI ranks and, within each rank, to \(P\) shared-memory partitions.
Each partition holds the data structures of Section~\ref{sec:gate-update} for the monomials assigned to it.
We usually assign one partition per core, so each rank owns \(P\) partitions.
For a gate generator \(\gamma\), \(\nu \in \anti(\gamma)\) on rank \(r\) and partition \(p\) must find its partner \(\nu\symd\gamma\) on rank \(r'\) and partition \(p'\).
An ordinary hash scatters the partners of a rank's monomials across many ranks, whereas the map below sends every partner request from rank \(r\) to a single destination determined by \(r\) and \(\gamma\).

\begin{definition}[Linear rank map]\label{def:linear-rank-hash}
  Let \(R=2^q\) and represent rank numbers by \(q\) bits.
  Give each position \(j\in[2N]\) a fixed \(q\)-bit label \(\xi_j\), identical across ranks but not necessarily distinct across positions.
  The owner rank of monomial \(\nu\) is the bitwise XOR of its positions' labels,
  \begin{equation}
    \rank(\nu)=\bigoplus_{j\in\nu}\xi_j.
    \label{eq:linear-rank-hash}
  \end{equation}
\end{definition}
The map is linear with respect to symmetric difference:
\begin{equation}
  \rank(\nu\symd\gamma)=\rank(\nu)\oplus\rank(\gamma).
  \label{eq:linear-rank-shift}
\end{equation}
Every partner request from rank \(r\) goes to \(r'=r\oplus\rank(\gamma)\).
When \(\rank(\gamma)=0\), no inter-rank communication is required.
Any labels satisfy this identity, but their choice sets the load balance across ranks.
We draw them from a fixed seed.
Within each rank, \(\splitmix\) assigns local partitions independently, giving the global partition number
\[
  \rank(\nu)P+\bigl(\splitmix(\nu)\bmod P\bigr).
\]
This spreads local work without changing the single destination rank.

The bitwise partitioning of parallel iterative qubit coupled cluster (iQCC) can be viewed as a linear hash, which assigns each of \(q\) selected positions a different unit-bit label and all others zero~\cite{hosseini_parallel_iqcc_2026}.
This fixes the partner partition, but Hosseini et al.\ assign partitions to processors dynamically for load balance, so a processor may still communicate with several others per gate.
Broers, Sun, and Yunoki instead sum blocks of the Pauli bit string modulo the process count~\cite{broers_orqa_2025}.
An XOR flip can increase or decrease that sum, giving up to \(2^{2\dg{\gamma}}+1\) destinations.

\section{Surrogate graphs and gradients}\label{app:surrogate}

The surrogate graph records the partner pairs and signs met during propagation (Appendix~\ref{app:propagation}) and replays them to re-evaluate the expectation value and gradient at new angles.
It is a sequence \(\mathcal G=(\mathcal L_\ell)_{\ell=1}^{L}\) of layers, one per gate (Appendix~\ref{app:layer-contents}).
Layers are numbered in propagation order, so layer \(\ell\) belongs to \(U_{L+1-\ell}\), the \(\ell\)-th gate propagated.
In this appendix only, \(\gamma_\ell\) denotes that gate's generator.
Its angle is \(\theta_{p_\ell}\), and \(p_\ell\) lets several gates share one parameter.

\subsection{Replay as a linear map}\label{app:surrogate-map}

Layer \(\ell\) stores a map \(T_\ell(\theta_{p_\ell})\) that updates the coefficient vector \(c\) by a gate.
The map multiplies anticommuting coefficients by \(\cos\theta_{p_\ell}\) and adds \(\sin\theta_{p_\ell}\) corrections to untruncated pairs.
The surrogate expectation is therefore
\begin{equation}
  f_L(\theta) \approx \widetilde f(\theta)
  = w^{\mathsf T}T_L(\theta_{p_L})\cdots T_1(\theta_{p_1})c_0,
  \label{eq:surrogate-linear-map}
\end{equation}
where \(c_0\) is the initial coefficient vector and \(w\) holds the rescaled reference-state coefficients \(2^Nw_\nu\) of Appendix~\ref{app:reference-states}, which are \(\prod_{i\in\sites(\nu)}\zeta_i\) on tracked diagonal monomials and zero elsewhere.
The map also gives the analytical gradient.
Let \(c_{\ell-1}\) be the coefficient vector propagated through preceding layers, and let \(w_\ell\) be the contributing state coefficients pulled back through the following ones, so that \(w_L=w\) and \(w_{\ell-1}=T_\ell^{\mathsf T}w_\ell\).
For a parameter used by a single layer \(\ell\), the derivative is
\begin{equation}
  \partial_{\theta_{p_\ell}}\widetilde f
  =
  w_\ell^{\mathsf T}\bigl(\partial_{\theta_{p_\ell}}T_\ell\bigr)c_{\ell-1}.
  \label{eq:surrogate-gradient}
\end{equation}
If multiple layers share the same gate parameter, their contributions are added.
Appendix~\ref{app:analytic-gradient} evaluates Eq.~\eqref{eq:surrogate-gradient} for every parameter in one forward and one backward sweep.

\subsection{Layer structure}\label{app:layer-contents}
Layer \(\ell\) stores
\begin{equation}
  \mathcal L_\ell
  = \bigl(p_\ell,\;\mathcal P_\ell,\;\Phi_\ell\bigr),
  \label{eq:layer}
\end{equation}
where \(\mathcal P_\ell=\{(s_{\ell b},t_{\ell b})\}_b\) lists the coefficient-vector positions of each untruncated partner pair, and \(\Phi_\ell=\{\phi_{\ell b}\}_b\) their rotation signs (Eq.~\eqref{eq:rotation-sign}).
For one stored pair \((s,t,\phi)\), the coefficients rotate as
\begin{equation}
  \begin{pmatrix}c_s'\\c_t'\end{pmatrix}
  =
  \begin{pmatrix}
    \cos\theta     & -\phi\sin\theta \\
    \phi\sin\theta & \cos\theta
  \end{pmatrix}
  \begin{pmatrix}c_s\\c_t\end{pmatrix},
  \qquad \theta=\theta_{p_\ell}.
  \label{eq:surrogate-pair-rotation}
\end{equation}
An anticommuting monomial whose partner was truncated is only multiplied by \(\cos\theta\).
By Theorem~\ref{thm:xor}, a layer stores at most \(|\anti(\gamma_\ell)|=O(\imax)\) pairs, which is \(O(K/N)\) for an evenly supported operator.
The layer does not store \(\anti(\gamma_\ell)\), which receives the cosine factor, but looks it up from the index (Eq.~\eqref{eq:index-query}) at each replay.
Algorithm~\ref{alg:apply-surrogate-layer} records the pair endpoints before applying the cosine factor to \(\anti(\gamma_\ell)\) and the sine corrections to the stored pairs.

\begin{algorithm}[H]
  \caption{\textsc{ApplyLayer}\((c,\mathcal L_\ell,\theta,\anti(\gamma_\ell))\)}
  \label{alg:apply-surrogate-layer}
  \begin{algorithmic}[1]
    \For{\(b=1,\ldots,|\mathcal P_\ell|\)}
    \State \((s_b,t_b)\gets\mathcal P_\ell[b]\), \quad \(\phi_b\gets\Phi_\ell[b]\)
    \State \((u_b,v_b)\gets(c_{s_b},c_{t_b})\) \Comment{copy the endpoints before they are modified}
    \EndFor
    \For{\(\nu\in\anti(\gamma_\ell)\)}
    \State \(c_\nu\gets c_\nu \cos\theta \)
    \EndFor
    \For{\(b=1,\ldots,|\mathcal P_\ell|\)}
    \State \(c_{s_b}\gets c_{s_b}-\phi_b v_b\sin\theta\)
    \State \(c_{t_b}\gets c_{t_b}+\phi_b u_b\sin\theta\)
    \EndFor
    \State \Return \(c\)
  \end{algorithmic}
\end{algorithm}

\subsection{Analytical gradient}\label{app:analytic-gradient}

In Eq.~\eqref{eq:surrogate-gradient}, the differentiated pair rotation is
\begin{equation}
  \partial_{\theta}
  \begin{pmatrix}
    \cos\theta     & -\phi\sin\theta \\
    \phi\sin\theta & \cos\theta
  \end{pmatrix}
  =
  \begin{pmatrix}
    -\sin\theta    & -\phi\cos\theta \\
    \phi\cos\theta & -\sin\theta
  \end{pmatrix}.
  \label{eq:surrogate-pair-derivative}
\end{equation}
Algorithm~\ref{alg:surrogate-gradient} evaluates the expectation value and all gradients with one forward and one reverse sweep.
Here \(\textsc{DifferentiateLayer}\) applies Eq.~\eqref{eq:surrogate-pair-derivative} to untruncated pairs, multiplies cosine-only entries by \(-\sin\theta\), sets all entries outside \(\anti(\gamma_\ell)\) to zero, and returns the resulting vector.
The operation \(\textsc{UndoLayer}\) rotates each stored pair by \(-\theta\) and divides cosine-only entries by \(\cos\theta\), recovering the coefficient vector before the layer.
This in-place recovery assumes \(\cos\theta_{p_\ell}\neq0\).
Otherwise, \code{monoprop} stores the coefficients before the layer.
Each pair block of \(T_\ell\) (Eq.~\eqref{eq:surrogate-pair-rotation}) is a rotation, whose transpose is the rotation by \(-\theta\).
The remaining entries are diagonal and even in \(\theta\), so \(T_\ell(\theta)^{\mathsf T}=T_\ell(-\theta)\).
The pull-back \(w_{\ell-1}=T_\ell^{\mathsf T}w_\ell\) is therefore a call to \textsc{ApplyLayer} at angle \(-\theta\).
This is the transpose, not the inverse, because the cosine-only entries make \(T_\ell\) non-orthogonal.
In Algorithm~\ref{alg:surrogate-gradient}, \(\bar w\) holds the pulled-back \(w_\ell\), starting from \(w\).

\begin{algorithm}[H]
  \caption{\textsc{EvaluateAndGradient}\((c_0,w,\mathcal G,\theta,\{\anti(\gamma_\ell)\})\)}
  \label{alg:surrogate-gradient}
  \begin{algorithmic}[1]
    \State \(c\gets c_0\)
    \For{\(\ell=1,\ldots,L\)}
    \State \(c\gets\Call{ApplyLayer}{c,\mathcal L_\ell,\theta_{p_\ell},\anti(\gamma_\ell)}\)
    \EndFor
    \State \(\widetilde f\gets w^{\mathsf T}c\), \quad \(\bar w\gets w\), \quad \(\nabla\widetilde f\gets0\)
    \For{\(\ell=L,\ldots,1\)}
    \State \(c\gets\Call{UndoLayer}{c,\mathcal L_\ell,\theta_{p_\ell},\anti(\gamma_\ell)}\) \Comment{recover \(c_{\ell-1}\)}
    \State \(\dot c\gets\Call{DifferentiateLayer}{c,\mathcal L_\ell,\theta_{p_\ell},\anti(\gamma_\ell)}\)
    \State \(\partial_{\theta_{p_\ell}}\widetilde f\gets\partial_{\theta_{p_\ell}}\widetilde f+\bar w^{\mathsf T}\dot c\)
    \State \(\bar w\gets\Call{ApplyLayer}{\bar w,\mathcal L_\ell,-\theta_{p_\ell},\anti(\gamma_\ell)}\) \Comment{apply \(T_\ell^{\mathsf T}\)}
    \EndFor
    \State \Return \((\widetilde f,\nabla\widetilde f)\)
  \end{algorithmic}
\end{algorithm}

\section{Generating optimal gates and angles}\label{app:gate-selection}

This appendix proves Theorem~\ref{thm:energy-selection} in the Majorana basis with a Fock reference.
Appendix~\ref{app:one-gate-energy} reduces scoring to two coefficients \(A_\gamma\) and \(B_\gamma\), Lemmas~\ref{lem:flip-table} and~\ref{lem:gradients} compute them, and Appendix~\ref{app:generate-proof} combines them.

An electronic-structure Hamiltonian with one- and two-body terms has even Majorana length of at most four~\cite{mcardle_chemistry_2020}.
Propagating it through gates with even-length generators preserves its parity but can increase the length of its monomials, up to the cutoff \(d\).
Let \(H_{L-1}=\sum_{\nu\in\mathcal K_d}c_\nu M_\nu\), of even parity and degree at most \(d\), be the observable back-propagated through the \(L-1\) gates already in the circuit, so that the candidate \(L\)-th gate acts directly on the reference.
Let \(\ket n = \ket{n_1, n_2, \dots, n_N}\), with \(n_i\in\{0,1\}\), be the Fock reference.
A Majorana monomial \(\nu\) in \(H_{L-1}\) has a non-zero Fock expectation when it is \emph{diagonal}, that is, when \(M_\nu=\prod_{i\in S} \iu m_{2i-1} m_{2i}\) for a mode set \(S\) (Appendix~\ref{app:reference-states}).

The candidate gates are \(U_\gamma(\theta)=e^{-\iu\theta M_\gamma/2}\) with \(|\gamma|\) even.
If \(\gamma\) contains both Majoranas of a mode, that pair acts on \(\ket n\) as a sign, which reversing the angle absorbs, so on the reference the gate equals one whose generator has degree two lower.
Thus, for each even \(\gdeg\), it suffices to consider generators with one Majorana on each of \(\gdeg\) distinct modes, so that \(|\gamma|=|\sites(\gamma)|=\gdeg\).
At degree \(\gdeg\) the pool has \(2^\gdeg\binom N\gdeg=O(N^\gdeg)\) candidates.
Scoring this pool directly costs \(O(N^{d+\gdeg})\) (Section~\ref{sec:selection}).

\subsection{Energy of one gate}\label{app:one-gate-energy}

Write \(E(n')=\langle n'|H_{L-1}|n'\rangle\) for the energy of an occupation string \(n'\), \(\flip{n}{\Lambda}\) for \(n\) with the occupations of the modes in \(\Lambda\) flipped, \(f_\gamma(\theta)=\langle n|U_\gamma^\dagger H_{L-1}U_\gamma|n\rangle\), and \(\langle\cdot\rangle_n\coloneq\langle n|\cdot|n\rangle\).
Since \(M_\gamma^2=I\), the gate is \(\cos(\theta/2)I-\iu\sin(\theta/2)M_\gamma\).
Split the observable as \(H_{L-1}=H_{\anti(\gamma)}+H_{\com(\gamma)}\), the sums of the tracked monomials in \(\anti(\gamma)\) and in its complement \(\com(\gamma)=\mathcal K_d\setminus\anti(\gamma)\), which anticommute and commute with \(M_\gamma\), respectively.
Conjugation by \(M_\gamma\) negates the first and fixes the second, so
\begin{equation}
  U_\gamma^\dagger H_{L-1}U_\gamma
  =H_{\com(\gamma)}+\cos\theta\,H_{\anti(\gamma)}
  +\iu\sin\theta\,M_\gamma H_{\anti(\gamma)}.
  \label{eq:conjugated-split}
\end{equation}
Its reference expectation is the single-angle sinusoid~\cite{nakanishi_sequential_2020,ostaszewski_structure_2021}
\begin{equation}
  f_\gamma(\theta)=A_\gamma\cos\theta+B_\gamma\sin\theta+C_\gamma,
  \qquad
  A_\gamma=\langle H_{\anti(\gamma)}\rangle_n,\quad
  B_\gamma=\iu\langle M_\gamma H_{\anti(\gamma)}\rangle_n,\quad
  C_\gamma=\langle H_{\com(\gamma)}\rangle_n.
  \label{eq:one-angle}
\end{equation}
The operator \(M_\gamma H_{\anti(\gamma)}\) is anti-Hermitian, so \(B_\gamma\) is real, and it is the gradient \(f_\gamma'(0)\).
Because \(M_\gamma\) flips the occupations of \(\sites(\gamma)\), \(\langle M_\gamma H_{L-1}M_\gamma\rangle_n=E(\flip{n}{\sites(\gamma)})=C_\gamma-A_\gamma\).
Since \(E(n)=A_\gamma+C_\gamma\),
\begin{equation}
  A_\gamma=\tfrac12\bigl[E(n)-E(\flip{n}{\sites(\gamma)})\bigr].
  \label{eq:A-as-flip}
\end{equation}
The oscillating part has amplitude \(\sqrt{A_\gamma^2+B_\gamma^2}\), so the optimal angle and energy decrease are
\begin{equation}
  \theta_\gamma^*=\operatorname{atan2}(-B_\gamma,-A_\gamma),
  \qquad
  \Delta E_\gamma=A_\gamma+\sqrt{A_\gamma^2+B_\gamma^2},
  \label{eq:optimal-angle}
\end{equation}
with \(\theta_\gamma^*=0\) when \(A_\gamma=B_\gamma=0\).
A candidate with \(B_\gamma=0\) and \(A_\gamma>0\) still lowers the energy, at \(\theta=\pi\), which gradient-only selection misses.
Scoring the pool thus reduces to computing \(A_\gamma\) and \(B_\gamma\) for every candidate.

\subsection{Flip energies for all candidates}\label{app:flip-table}

\begin{lemma}[Flip table]\label{lem:flip-table}
  After expected \(O(N^{d/2})\) preprocessing, \(E(\flip{n}{\Lambda})\), and hence \(A_\gamma\) by Eq.~\eqref{eq:A-as-flip}, is available for every mode set \(\Lambda\) with \(|\Lambda|\le\gdeg\) from \(2^{|\Lambda|}\) table entries.
\end{lemma}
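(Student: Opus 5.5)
The plan is to write the energy of a flipped Fock string as a sum over the diagonal part of \(H_{L-1}\) only, and to reorganise it by which modes each diagonal monomial touches.

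\textbf{Step 1: reduce to diagonal monomials.} For any occupation string \(n'\), only diagonal monomials have non-zero Fock expectation (Appendix~\ref{app:reference-states}). Hence
\[
  E(n')=\sum_{S}c_S\prod_{i\in S}\zeta_i(n'),
\]
where \(c_S\) is the coefficient of the diagonal monomial on mode set \(S\) and \(\zeta_i(n')=(-1)^{n'_i+1}\). Flipping the modes of \(\Lambda\) negates \(\zeta_i\) exactly for \(i\in\Lambda\), which gives
\[
  E(\flip{n}{\Lambda})=\sum_S(-1)^{|S\cap\Lambda|}c_S\,\zeta^S,
  \qquad
  \zeta^S\coloneq\prod_{i\in S}\zeta_i(n).
\]
Since \(H_{L-1}\) has degree at most \(d\), every contributing \(S\) satisfies \(|S|\le d/2\). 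There are at most \(\sum_{k\le d/2}\binom Nk=O(N^{d/2})\) such sets.

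\textbf{Step 2: build the table.} Define the table \(T[\Lambda']=\sum_{S:\,S\cap\Lambda=\Lambda'}c_S\zeta^S\). A table indexed by the whole query \(\Lambda\) would be too large, so instead I would store the partial sums
\[
  G[R]\coloneq\sum_{S\supseteq R}c_S\zeta^S
\]
for every \(R\) with \(|R|\le d/2\). These are filled in one pass: each tracked diagonal \(S\) is added to \(G[R]\) for all \(2^{|S|}=O(1)\) subsets \(R\subseteq S\), using a hash map. This costs expected \(O(N^{d/2})\) time, plus the cost of locating the diagonal monomials. The latter is also \(O(N^{d/2})\) if diagonal monomials are tracked in a separate list as they are created, or if we simply enumerate all candidate \(S\) and query the hash table for each.

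\textbf{Step 3: answer a query.} Write the parity sign via inclusion--exclusion,
\[
  (-1)^{|S\cap\Lambda|}=\sum_{R\subseteq S\cap\Lambda}(-2)^{|R|},
\]
which holds because \(\sum_{R\subseteq A}(-2)^{|R|}=(1-2)^{|A|}\). Substituting into Step 1 gives
\[
  E(\flip{n}{\Lambda})=\sum_{R\subseteq\Lambda}(-2)^{|R|}\,G[R].
\]
This reads exactly the \(2^{|\Lambda|}\) table entries \(G[R]\) for \(R\subseteq\Lambda\); entries with \(|R|>d/2\) are zero. The identity \(\Delta E\)-style formula Eq.~\eqref{eq:A-as-flip} then yields \(A_\gamma\) with \(\Lambda=\sites(\gamma)\), and \(|\Lambda|\le\gdeg\) is constant.

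The main obstacle is the preprocessing bound. The delicate point is ensuring the diagonal monomials can be found in \(O(N^{d/2})\) without scanning all \(K=O(N^d)\) tracked terms. Enumerating the \(O(N^{d/2})\) possible diagonal supports and probing the hash table settles this, and the probes are the source of the word "expected" in the statement.
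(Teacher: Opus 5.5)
Your proof is correct and is essentially the paper's argument: your \(G[R]\) is the paper's \(W(\Lambda)\), filled by enumerating the \(O(N^{d/2})\) possible diagonal supports, probing the hash table and adding each hit to its \(O(1)\) subsets, and the query uses the same identity \((-1)^{|S\cap\Lambda|}=\sum_{R\subseteq S\cap\Lambda}(-2)^{|R|}\) to obtain \(E(\flip{n}{\Lambda})=\sum_{R\subseteq\Lambda}(-2)^{|R|}G[R]\). The only differences are cosmetic. You store subsets of size up to \(d/2\) where the paper stops at \(\gdeg\), which is harmless since it is still \(O(1)\) entries per monomial. The table \(T[\Lambda']\) you introduce is never used and can be dropped.
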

\begin{proof}
  For each tracked diagonal monomial \(\delta\), write \(a_\delta=c_\delta\langle M_\delta\rangle_n\), and define
  \[
    W(\Lambda)=\sum_{\substack{\delta\text{ diagonal, tracked}\\\Lambda\subseteq\sites(\delta)}}a_\delta,
    \qquad |\Lambda|\le\gdeg,
  \]
  so that \(W(\varnothing)=E(n)\).
  A diagonal monomial within the cutoff occupies at most \(d/2\) modes, so enumerating the \(O(N^{d/2})\) such monomials, probing each, and adding each hit to its at most \(2^{d/2}=O(1)\) mode subsets takes expected \(O(N^{d/2})\).
  Flipping the occupations of \(\Lambda\) multiplies \(a_\delta\) by \((-1)^{|\sites(\delta)\cap\Lambda|}=\sum_{\Lambda'\subseteq\sites(\delta)\cap\Lambda}(-2)^{|\Lambda'|}\), and exchanging the sums gives
  \begin{equation}
    E(\flip{n}{\Lambda})=\sum_{\Lambda'\subseteq\Lambda}(-2)^{|\Lambda'|}W(\Lambda').
    \label{eq:flip-table}
  \end{equation}
  For two modes, for example, \(E(\flip{n}{\{i,i'\}})=E(n)-2W(\{i\})-2W(\{i'\})+4W(\{i,i'\})\).
\end{proof}

\subsection{Gradients from the state side}\label{app:gradient-enumeration}

\begin{figure}[t]
  \centering
  \begin{tikzpicture}[x=0.62cm, y=0.62cm,
      cell/.style={draw=black!35, line width=0.3pt},
      on/.style={draw=black!35, line width=0.3pt, fill=black!72},
      lab/.style={font=\small, anchor=east},
      note/.style={font=\footnotesize, anchor=west},
      hd/.style={font=\footnotesize}]
    \foreach \x in {0,1,2.4,3.4,4.8,5.8}
    \foreach \y in {0,-1,-2}
    \draw[cell] (\x,\y) rectangle ++(0.9,0.8);
    \foreach \x in {1,2.4} \draw[on] (\x,0) rectangle ++(0.9,0.8);
    \foreach \x in {0,2.4,4.8,5.8} \draw[on] (\x,-1) rectangle ++(0.9,0.8);
    \foreach \x in {0,1,4.8,5.8} \draw[on] (\x,-2) rectangle ++(0.9,0.8);
    \node[lab] at (-0.2,0.4) {\(\gamma\)};
    \node[lab] at (-0.2,-0.6) {\(\nu\)};
    \node[lab] at (-0.2,-1.6) {\(\delta=\gamma\symd\nu\)};
    \foreach \x/\j in {0/1,1/2,2.4/3,3.4/4,4.8/5,5.8/6}
    \node[hd] at (\x+0.45,1.15) {\(m_{\j}\)};
    \foreach \x/\t in {0/{\(Q\)},2.4/{\(V\)},4.8/{\(S\setminus Q\)}}
    \node[hd] at (\x+0.95,-2.6) {\t};
    \node[note] at (6.95,0.4) {one Majorana on each of \(\gdeg=2\) modes};
    \node[note] at (6.95,-0.6) {\(|\gamma\cap\nu|=\gdeg-\omega=1\), odd: anticommutes};
    \node[note] at (6.95,-1.6) {diagonal on \(2\) modes; union spans \(3=(d+\gdeg)/2\)};
  \end{tikzpicture}
  \caption{\label{fig:state-side} \textbf{A compatible pair, generated from the state side}, for \(d=4\) and \(\gdeg=2\).
    Rows are the monomials \(\gamma=\{2,3\}\), \(\nu=\{1,3,5,6\}\), and \(\delta=\{1,2,5,6\}\) on three modes, with black cells marking occupied positions.
    The diagonal \(\delta\) occupies the modes \(S\) and shares \(\omega=1\) of them, \(Q\), with the candidate, where \(\nu\) holds the other Majorana; the candidate's remaining mode \(V\) lies outside \(S\), where \(\nu\) holds the same Majorana, and \(\nu\) holds both Majoranas on \(S\setminus Q\).
    Choosing \(\delta\), \(Q\), \(V\), and one Majorana per candidate mode fixes the pair.}
\end{figure}

\begin{lemma}[State-side enumeration]\label{lem:gradients}
  The gradients \(B_\gamma\) of all candidates of degree \(\gdeg\) can be computed in expected \(O(N^{(d+\gdeg)/2})\) time, without scanning the \(K\) tracked monomials.
\end{lemma}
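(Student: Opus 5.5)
We start from Eq.~\eqref{eq:one-angle}, \(B_\gamma=\iu\langle M_\gamma H_{\anti(\gamma)}\rangle_n\), and expand \(H_{\anti(\gamma)}=\sum_{\nu\in\anti(\gamma)}c_\nu M_\nu\). By Eq.~\eqref{eq:monomial-product}, \(M_\gamma M_\nu=\chi(\gamma,\nu)M_{\gamma\symd\nu}\). By Appendix~\ref{app:reference-states}, the Fock expectation of \(M_{\gamma\symd\nu}\) vanishes unless \(\delta\coloneq\gamma\symd\nu\) is diagonal. So
\[
B_\gamma=\sum_{\substack{\delta\ \text{diagonal}\\ \nu=\gamma\symd\delta\in\anti(\gamma)}}\iu\,\chi(\gamma,\nu)\,c_\nu\langle M_\delta\rangle_n .
\]
The plan is to invert the loop. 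We enumerate triples \((\delta,\gamma,\nu)\) with \(\nu=\gamma\symd\delta\), starting from the diagonal \(\delta\). For each triple we probe the hash table once for \(\nu\) (expected \(O(1)\)), and if \(\nu\) is tracked we add its contribution to an accumulator for \(B_\gamma\). This never scans the \(K\) tracked monomials.

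The key step is parametrising the compatible triples, as in Fig.~\ref{fig:state-side}. Let \(S=\sites(\delta)\) and \(Q=S\cap\sites(\gamma)\), with \(\omega=|Q|\), and let \(V=\sites(\gamma)\setminus S\). On each mode of \(Q\), \(\gamma\) has one Majorana and \(\delta\) has both, so \(\nu\) has the other one. On \(V\), \(\nu\) holds the same Majorana as \(\gamma\). On \(S\setminus Q\), \(\nu\) holds both Majoranas. Hence \(|\nu|=\omega+(\gdeg-\omega)+2(|S|-\omega)=\gdeg+2|S|-2\omega\), and \(|\gamma\cap\nu|=\gdeg-\omega\). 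Two constraints then follow:
\begin{itemize}
\item anticommutation (Eq.~\eqref{eq:majorana-commutation}, \(\gamma\) even) requires \(\gdeg-\omega\) odd;
\item the cutoff requires \(|\nu|\le d\), i.e.\ \(|S|-\omega\le (d-\gdeg)/2\).
\end{itemize}
The modes touched by the triple are \(S\cup V\), and
\[
|S\cup V|=|S|+\gdeg-\omega\le (d+\gdeg)/2 .
\]
So the enumeration is as follows. For each \(\omega\) of the right parity and each size \(|S|\) within the bound, choose \(S\), then \(Q\subseteq S\) (\(O(1)\) choices), then \(V\) outside \(S\), then one Majorana per mode of \(\gamma\) (\(2^{\gdeg}=O(1)\) choices). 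This costs \(O(N^{|S|+\gdeg-\omega})=O(N^{(d+\gdeg)/2})\) in total. Each triple determines \(\gamma\), \(\nu\), the sign \(\chi(\gamma,\nu)\) in \(O(d+\gdeg)\) time (Appendix~\ref{app:multiplicative-phase}), and \(\langle M_\delta\rangle_n=\prod_{i\in S}\zeta_i\). Each \(\gamma\)'s accumulator is stored in a hash map or in a dense array indexed by its \(O(N^{\gdeg})\) labels. Since \(\gdeg\le d\), the \(O(N^{\gdeg})\) memory is within budget.

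Two details need checking. First, we must not restrict the enumeration to tracked diagonal \(\delta\): \(\delta\) is a product, not a tracked term. We enumerate all \(O(N^{|S|})\) mode sets and rely on the cutoff inequality to keep this within budget. Second, each \((\gamma,\nu)\) pair must be counted exactly once. This holds because \(\delta\) is determined by \(\gamma\) and \(\nu\), and \((Q,V)\) is then determined by \(\delta\) and \(\gamma\).

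The main obstacle is the counting step: showing that \(|S\cup V|\le (d+\gdeg)/2\) holds for every admissible \(\omega\), including \(\omega=0\), where \(\nu\) would have length \(\gdeg+2|S|\). The same cutoff inequality covers that case. We also need to confirm that the hidden constants (\(2^{\gdeg}\), \(\binom{|S|}{\omega}\)) are \(O(1)\) at fixed \(d\).
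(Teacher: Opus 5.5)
Your proposal is correct and follows the paper's proof essentially step for step: you expand \(B_\gamma\) into compatible pairs \((\gamma,\nu)\) with diagonal \(\delta=\gamma\symd\nu\), you use the same \((S,Q,V,\omega)\) parametrisation with \(|S\cup V|=|S|+\gdeg-\omega=(\gdeg+|\nu|)/2\le(d+\gdeg)/2\), and you give the same uniqueness argument and the same one-hash-probe-per-choice count. The ``obstacle'' you flag at the end is not one: \(\omega=0\) is already excluded by your own parity constraint (with \(\gdeg\) even, \(\gdeg-\omega\) odd forces \(\omega\) odd), and in any case the identity \(|S\cup V|=(\gdeg+|\nu|)/2\) holds for every \(\omega\), so the cutoff \(|\nu|\le d\) alone gives the bound.
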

\begin{proof}
  Differentiating the gate gives
  \begin{equation}
    B_\gamma=\frac{\iu}{2}\langle n|[M_\gamma,H_{L-1}]|n\rangle
    =\sum_{\nu\in\anti(\gamma)}
    \phi_\gamma(\nu)\,c_\nu\langle M_{\gamma\symd\nu}\rangle_n,
    \label{eq:gradient-sum}
  \end{equation}
  with the rotation sign of Eq.~\eqref{eq:rotation-sign}.
  A monomial contributes only when \(\delta=\gamma\symd\nu\) is diagonal and \(|\gamma\cap\nu|\) is odd, and we call such a pair \((\gamma,\nu)\) \emph{compatible}.
  We generate compatible pairs from \(\delta\) (Fig.~\ref{fig:state-side}).
  Let \(S=\sites(\delta)\), let \(Q=\sites(\gamma)\cap S\) with \(\omega=|Q|\), and let \(V=\sites(\gamma)\setminus S\).
  Then \(\nu=\gamma\symd\delta\) holds the other Majorana on each mode of \(Q\), the same Majorana as \(\gamma\) on each mode of \(V\), and both Majoranas on each mode of \(S\setminus Q\).
  Hence \(|\gamma\cap\nu|=\gdeg-\omega\), which, as \(\gdeg\) is even, is odd exactly when \(\omega\) is odd, and
  \begin{equation}
    |\nu|=\gdeg+2(|S|-\omega),
    \qquad
    |S\cup V|=|S|+\gdeg-\omega=\frac{\gdeg+|\nu|}{2}\le\frac{d+\gdeg}{2}.
    \label{eq:selection-mode-count}
  \end{equation}
  So each compatible pair is fixed by a mode set \(S\), an odd \(\omega\), the sets \(Q\subseteq S\) and \(V\subseteq[N]\setminus S\) of sizes \(\omega\) and \(\gdeg-\omega\), and one Majorana on each mode of \(Q\cup V\), subject to the cutoff \(\gdeg+2|S|-2\omega\le d\).
  Conversely, a pair determines \(\delta\) and hence this data, so every compatible pair is generated exactly once.
  The modes involved number at most \((d+\gdeg)/2\), giving \(O(N^{(d+\gdeg)/2})\) choices with constantly many configurations each, and one expected constant-time hash probe for \(c_\nu\) per choice.
\end{proof}

\begin{algorithm}[H]
  \caption{\textsc{GenerateGates}\((c,n;\gdeg)\)}
  \label{alg:generate-gates}
  \begin{algorithmic}[1]
    \Require coefficients \(c\), Fock state \(n\), \(N\) modes, even \(2\le\gdeg\le d\)
    \State initialise empty tables \(B,W\) and output list
    \Statex \Comment{1. Flip table (Lemma~\ref{lem:flip-table}): expected \(O(N^{d/2})\)}
    \ForAll{mode sets \(S\) with \(|S|\le\lfloor d/2\rfloor\)}
    \State \(\delta\gets\Call{Pairs}{S}\), \quad \(a_\delta\gets c_\delta\langle M_\delta\rangle_n\)
    \ForAll{\(\Lambda\subseteq S\) with \(|\Lambda|\le\gdeg\)}
    \State \(W(\Lambda)\gets W(\Lambda)+a_\delta\)
    \EndFor
    \EndFor
    \Statex \Comment{2. Gradients (Lemma~\ref{lem:gradients}): expected \(O(N^{(d+\gdeg)/2})\)}
    \ForAll{mode sets \(S\) with \(|S|\le\lfloor(d+\gdeg)/2\rfloor-1\)}
    \State \(\delta\gets\Call{Pairs}{S}\)
    \ForAll{odd \(1\le\omega\le\min(\gdeg,|S|)\) with \(\gdeg+2|S|-2\omega\le d\)}
    \ForAll{\(Q\subseteq S\), \(|Q|=\omega\), and \(V\subseteq[N]\setminus S\), \(|V|=\gdeg-\omega\)}
    \ForAll{\(\gamma\) choosing one position per mode in \(Q\cup V\)}
    \State \(\nu\gets\gamma\symd\delta\) \Comment{compatible by construction}
    \State \(B_\gamma\gets B_\gamma+\phi_\gamma(\nu)\,c_\nu\langle M_\delta\rangle_n\)
    \EndFor
    \EndFor
    \EndFor
    \EndFor
    \Statex \Comment{3. Score every candidate, Eq.~\eqref{eq:optimal-angle}: \(O(N^\gdeg)\)}
    \ForAll{mode sets \(\Lambda\) with \(|\Lambda|=\gdeg\)}
    \State \(E_{\mathrm{flip}}\gets\sum_{\Lambda'\subseteq\Lambda}(-2)^{|\Lambda'|}W(\Lambda')\) \Comment{\(2^\gdeg=O(1)\) entries}
    \ForAll{\(\gamma\) choosing one position per mode in \(\Lambda\)}
    \State \(A_\gamma\gets[W(\varnothing)-E_{\mathrm{flip}}]/2\)
    \State \(\theta_\gamma^*\gets0\)
    \If{\(A_\gamma\ne0\) or \(B_\gamma\ne0\)}
    \State \(\theta_\gamma^*\gets\operatorname{atan2}(-B_\gamma,-A_\gamma)\)
    \EndIf
    \State \(\Delta E_\gamma\gets A_\gamma+\sqrt{A_\gamma^2+B_\gamma^2}\)
    \State append \((\gamma,\theta_\gamma^*,\Delta E_\gamma)\) to output
    \EndFor
    \EndFor
    \State \Return output
  \end{algorithmic}
\end{algorithm}

\subsection{Algorithm and proof of \texorpdfstring{Theorem~\ref{thm:energy-selection}}{the selection theorem}}\label{app:generate-proof}

Algorithm~\ref{alg:generate-gates} runs the two lemmas and then scores every candidate.
Its input \(c\) is the coefficient hash table maintained by propagation.
Absent entries of \(c\), \(B\), and \(W\) read as zero.
For a mode set \(S\), \(\textsc{Pairs}(S)\) is the diagonal monomial containing both Majoranas of every mode in \(S\).
Choosing one position per mode of a mode set enumerates the generators with one Majorana on each of those modes.

\begin{proof}[Proof of Theorem~\ref{thm:energy-selection}]
  Fix \(d,\gdeg=O(1)\).
  For each even degree \(\gdeg'\le\gdeg\), Algorithm~\ref{alg:generate-gates} returns \((\gamma,\theta_\gamma^*,\Delta E_\gamma)\) for every candidate of degree \(\gdeg'\).
  Stage~1 builds the flip table once for all degrees, in expected \(O(N^{d/2})\) time by Lemma~\ref{lem:flip-table}.
  Stage~2 computes every \(B_\gamma\) in expected \(O(N^{(d+\gdeg')/2})\) by Lemma~\ref{lem:gradients}.
  Stage~3 reads \(2^{\gdeg'}\) table entries and applies Eq.~\eqref{eq:optimal-angle} for each of the \(O(N^{\gdeg'})\) candidates, taking unwritten gradients as zero.
  Summed over the constantly many degrees, and using \(\gdeg\le d\), the time is expected
  \[
    O\!\left(N^{(d+\gdeg)/2}+N^{d/2}+N^\gdeg\right)=O\!\left(N^{(d+\gdeg)/2}\right),
  \]
  with \(O(N^\gdeg)\) memory for the tables and output.
  By the reduction of the candidate pool, these candidates cover every even monomial gate of degree at most \(\gdeg\) that acts non-trivially on the reference.
  The next gate is the candidate with the largest \(\Delta E_\gamma\), and its energy is \(W(\varnothing)-\Delta E_\gamma\).
\end{proof}

\section{Benchmark models}\label{app:models}
\subsection{One-dimensional Hubbard model}\label{app:hubbard}
The distributed benchmarks of Fig.~\ref{fig:mpi-scaling} use an open Hubbard chain with 60 lattice sites and two spin species, giving \(N=120\) fermionic modes and Hamiltonian
\begin{equation}
  H = -t \sum_{i=1}^{59} \sum_{\sigma \in \{\uparrow, \downarrow\}}
  \bigl( a_{i\sigma}^{\dagger} a_{i+1,\sigma} + a_{i+1,\sigma}^{\dagger} a_{i\sigma} \bigr)
  + \mathcal U \sum_{i=1}^{60} n_{i\uparrow} n_{i\downarrow},
  \label{eq:hubbard}
\end{equation}
with \(t=1\), \(\mathcal U=-2\), zero chemical potential, and \(n_{i\sigma}=a_{i\sigma}^{\dagger}a_{i\sigma}\).
We propagate \(n_{46\uparrow}\) through one first-order Trotter step of duration \(\Delta t=0.2\), which applies each Hamiltonian term once, with the half-filled N\'eel state as reference.
The Majorana-length cutoff is \(d=10\), and we vary the coefficient threshold to control the number of tracked monomials.

\subsection{Kicked-Ising model}\label{app:kicked-ising}
Figure~\ref{fig:inverse-scaling}(a) uses an open kicked-Ising chain with \(N=32,64,\ldots,1024\) qubits.
Each layer applies a \(Z_iZ_{i+1}\) rotation on every nearest-neighbour bond, followed by an \(X_i\) rotation on every site, all at angle \(\pi/4\).
The observable is the magnetisation \(\sum_{i=1}^{N}Z_i\), back-propagated with Pauli-weight cutoff \(d=6\) and no coefficient truncation through seven, ten, and thirteen layers, tracking \(6.1\times10^{5}\)--\(2.5\times10^{7}\), \(2.7\times10^{6}\)--\(1.2\times10^{8}\), and \(7.2\times10^{6}\)--\(3.7\times10^{8}\) monomials, respectively, from \(N=32\) to 1024.

\subsection{Two-dimensional Hubbard model}\label{app:hubbard-2d}
Figure~\ref{fig:inverse-scaling}(b) uses the Hamiltonian of Eq.~\eqref{eq:hubbard} on open \(s\times s\) square lattices with side \(s=4\), 6, 8, 11, 16, and 22, giving \(N=2s^2=32,\ldots,968\) fermionic modes, with nearest-neighbour hopping \(t=1\), \(\mathcal U=1.5\), and zero chemical potential.
Each layer is a first-order Trotter step of duration \(\Delta t=0.07\), which applies the spin-up hopping on every bond, then the spin-down hopping on every bond, then the interaction on every site.
The bonds are applied in four classes: horizontal bonds starting at even \(x\), then odd \(x\), then vertical bonds starting at even \(y\), then odd \(y\).
Gates within a class commute, so the operator spreads by a bounded distance per layer.
The observable is the staggered magnetisation \(\sum_{i}(-1)^{x_i+y_i}(n_{i\uparrow}-n_{i\downarrow})\), back-propagated through three, four, and five layers, and the reference state is the N\'eel state with spin up on the sites of even \(x+y\).
There is no coefficient truncation.
Instead of the degree cutoff of Appendix~\ref{app:notation}, a monomial is kept if it acts on at most three modes, whatever its length, or if it is diagonal.
\code{MajoranaPropagation.jl} has no built-in truncation with this rule, so we pass it as a custom truncation function and have it truncate after each of the two Majorana rotations of a hopping gate (option \code{truncate\_each\_mr}), as \code{monoprop} does.
By default it would truncate once per hopping gate and keep more monomials.
With these settings, the two codes track identical monomials after every hopping gate.
Three layers track \(1.7\times10^{4}\) to \(3.0\times10^{7}\) monomials from \(N=32\) to 968, four layers \(1.7\times10^{4}\) to \(9.4\times10^{7}\), and five layers \(1.7\times10^{4}\) to \(2.0\times10^{8}\).
The cutoff saturates the small lattices, from three layers on the \(4\times4\) lattice and from four on the \(6\times6\) and \(8\times8\) lattices, so deeper layers there track the same monomials.

\subsection{Heavy-hex kicked-Ising circuit}\label{app:heavy-hex}
Figure~\ref{fig:graph-gradient} uses the 127-qubit IBM Eagle heavy-hex coupling map, with observable \(Z_{62}\) and reference state \(\ket{0}^{\otimes127}\).
The circuit has 4 to 12 layers, and each layer applies 127 single-qubit \(X\) rotations followed by \(ZZ\) rotations on the 144 coupled pairs, with an independent parameter for each gate.
Graph construction uses angles drawn uniformly from \([-\pi,\pi)\) with a fixed seed, Pauli-weight cutoff \(d=6\), and no coefficient truncation.

\section{Benchmark setup and software versions}\label{app:versions}
The codes of Fig.~\ref{fig:inverse-scaling} were measured on one exclusive node of Deucalion x86 (AMD EPYC 7742, 128 cores), each run single-threaded and pinned to its own core.
Up to eight runs shared the node, one per NUMA domain, so that no two shared a last-level cache or memory controller.
Runs timed alone and beside seven others agreed to within the run-to-run spread.
\code{monoprop} is version \code{0.9.3}, \code{ppvm} is \code{0.1.0+git.2570637b1459}, \code{PauliPropagation.jl} is v0.9.0, and \code{MajoranaPropagation.jl} is v0.4.1 with the \code{PauliPropagation.jl} v0.8.2 it requires, on Julia 1.12.7.
Every code is timed identically: each point is the fastest of up to ten repetitions on a freshly built operator.
\label{app:held}We report each run time divided by \(\sum_g K_g\), where \(K_g\) is the number of monomials held when gate \(g\) is applied, giving the mean time per gate per held monomial.
As the first repetition in a process can take up to twice as long, we run a second whenever two fit within the limit, and more while the timed total stays below one minute.
Each point is allowed 30 minutes of timed propagation.
Points exceeding this are omitted, and a code that exceeds it is not run at a deeper circuit of the same \(N\), nor at the next \(N\) of the same depth.
The codes agree exactly on the number of tracked monomials at every point, and on the expectation value to within \(1.2\times10^{-13}\).

\code{PauliPropagation.jl} provides two containers for a sum of Pauli terms, and we propagate the \code{VectorPauliSum} rather than the dictionary-backed \code{PauliSum}.
It holds the terms as unsigned integers and their coefficients in two parallel vectors, maintaining a sorted, duplicate-free prefix into which an unsorted tail is periodically merged.
On the kicked-Ising benchmark it is \(2.7\)--\(5.6\times\) faster than \code{PauliSum} for \(N=16\) to \(64\), and \code{MajoranaPropagation.jl}'s \code{VectorMajoranaSum} is \(2.1\)--\(2.3\times\) faster than its \code{MajoranaSum} on the Hubbard benchmark, so both comparisons take the reference library at its more favourable representation.

\end{document}